\newcommand{\wh}[1]{\widehat{#1}}
\newcommand{\F}{\mathbb{F}}
\renewcommand{\hat}{\widehat}
\theoremstyle{plain}
\newtheorem{theorem}{Theorem}[section]
\newtheorem{corollary}[theorem]{Corollary}
\newtheorem{definition}[theorem]{Definition}
\newtheorem{proposition}[theorem]{Proposition}
\newtheorem{remark}[theorem]{Remark}
\newtheorem{lemma}[theorem]{Lemma}
\newtheorem{claim}[theorem]{Claim}
\newtheorem{fact}[theorem]{Fact}
\newtheorem{problem}[theorem]{Problem}
\newcommand{\weyl}{\mathrm{Weyl}}
\newcommand{\eps}{\varepsilon}
\renewcommand{\Pr}{\mathop{\bf Pr\/}}
\newcommand{\Ex}{\mathop{\bf E\/}}
\newcommand{\tr}{\mathrm{tr}}  \newcommand{\trace}{\tr}
\newcommand{\calF}{\mathcal{F}}
\newcommand{\calP}{\mathcal{P}}
\newcommand{\abs}[1]{\lvert #1 \rvert}
\newcommand{\ketbra}[2]{\ket{#1}\!\!\bra{#2}}
\newcommand{\sympcomp}{\perp}
\newcommand{\indic}[1]{\mathbbm{1}_{#1}}   
\title{Tolerant Testing of Stabilizer States with Mixed State Inputs}
\author{Vishnu Iyer\thanks{University of Texas at Austin. \texttt{vishnu.iyer@utexas.edu}.} \and Daniel Liang\thanks{Portland State University. \texttt{danliang@pdx.edu}}}
\date{\today}
\begin{document}

	\maketitle
	
	\begin{abstract}
		We study the problem of tolerant testing of stabilizer states. In particular, we give the first such algorithm that accepts mixed state inputs. Formally, given a mixed state $\rho$ that either has fidelity at least $\eps_1$ with some stabilizer pure state or fidelity at most $\eps_2$ with all such states, where $\eps_2 \leq \eps_1^{O(1)}$, our algorithm distinguishes the two cases with sample complexity $\poly(1/\eps_1)$ and time complexity $O(n \cdot \poly(1/\eps_1))$.
	\end{abstract}
	
	\section{Introduction}
	In the property testing model, one is given an input state $\rho$ and a class of states $\calP$, known as a \emph{property}, and must decide whether $\rho$ is in $\calP$ or far from every state in $\calP$. We refer to \cite{montanaro2016survey} for an introduction to quantum property testing. This task can be generalized to \emph{tolerant} testing (which was introduced classically in \cite{parnas2006tolerant}): is $\rho$ $\eps_1$-close to something in $\calP$ or $\eps_2$-far from everything in $\calP$? Formally, denoting the fidelity between states $\rho$ and $\sigma$ as $\calF(\rho, \sigma)$, the problem is as follows:
	\begin{problem}
		Fix a set of quantum states $\calP$ (this is usually referred to as a \emph{property}) that is known in advance. Given copies of an unknown quantum state $\rho$ and parameters $\eps_1$ and $\eps_2$ such that $1 \geq \eps_1 > \eps_2 \geq 0$, decide if $\sup_{\sigma \in \calP} \calF(\rho, \sigma) \geq \eps_1$ or $\sup_{\sigma \in \calP} \calF(\rho, \sigma) \leq \eps_2$ promised that one of the cases is true. 
	\end{problem}
	
	Significant process has been made when $\calP$ is the set of stabilizer states.
	The fidelity to the closest stabilizer state $\ket \phi$ is commonly referred to as the \emph{stabilizer fidelity}
	\[
	\calF(\rho) \coloneqq \max_{\ket \phi} \braket{\phi | \rho  | \phi}.
	\]
	The first test was introduced by \cite{gross2021schur} (henceforth, the \emph{GNW test}) for when $\eps_1 = 0$, with follow-up work by \cite{damanik2018optimality}.
	\cite{grewal_et_al:LIPIcs.ITCS.2023.64,grewal2023improved} made the first steps towards tolerant testing by improving the completeness beyond $\eps_1 = 0$ for GNW test, but the soundness analysis has limitations in certain regimes.
	This has now been (partially) remedied by \cite{arunachalam2024toleranttestingstabilizerstates,arunachalam2024notepolynomialtimetoleranttesting,bao2024toleranttestingstabilizerstates,mehraban2024improvedboundstestinglow} by giving an improved soundness analysis in the `far' setting of stabilizer state property testing.
	
	We note that \cite{haug2023scalable} gave a related test with perfect completeness, but (to the authors' knowledge) no rigorous proof of soundness exists.
	\cite{bu2023stabilizertestingmagicentropy} derive a quantum generalization of the convolution and give a tolerant test with identical performance to the GNW test on pure states.
	The test works by making a pure state become more mixed the farther it is from a stabilizer state.
	The purity of the resulting state can then be measured using a SWAP test.
	\cite{hinsche2024singlecopystabilizertesting} interestingly gave the first test that only uses single-copy measurements with provable soundness guarantees, but (to the authors' knowledge) no provable completeness analysis beyond $\eps_1 = 0$ has been given, and it also has an inherent system-size dependence, unlike the work using the GNW test.
	
	\subsection{This Work}
	
	The major caveat is that in the analysis of \emph{all} of the previous works on stabilizer state property testing, the input state was assumed to be a pure state.
	Due to the inherent nature of property testing and noise, an algorithm that works when given mixed state inputs is desirable.
	To that end, we propose a new test, based on the GNW test, that provably works for mixed state inputs.
	
	\begin{theorem}
		Let $\eps_1, \eps_2 \in [0, 1]$ such that $\eps_2 \leq \eps_1^{O(1)}$. There exists an algorithm that takes $\poly(1/\eps_1)$ copies of $\rho$ and $n \cdot \poly(1/\eps_1)$ time and can perform property testing in this scenario with success probability greater than $2/3$ even when the input is a mixed state.
	\end{theorem}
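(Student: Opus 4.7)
The plan is to extend the Bell-difference-sampling framework of \cite{gross2021schur,grewal2023improved} from pure to mixed inputs and then verify a candidate stabilizer state by direct fidelity estimation. The central quantity is the squared Pauli spectrum $\tr(\rho P)^2$, which is accessed via a natural generalization of Bell difference sampling to mixed states: after appropriate conditioning, Paulis $P$ are sampled from the distribution $\tr(\rho P)^2 / (2^n \tr(\rho^2))$.

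Completeness is a short Cauchy--Schwarz argument. Suppose $\calF(\rho) \geq \eps_1$, witnessed by a stabilizer state $\ket{\phi}$ with stabilizer group $S(\phi) \subset \pauli_n$ and character $\chi_\phi : S(\phi) \to \pmone$. Expanding $\ket{\phi}\bra{\phi} = 2^{-n} \sum_{P \in S(\phi)} \chi_\phi(P) P$ gives
\[
    \sum_{P \in S(\phi)} \chi_\phi(P)\, \tr(\rho P) \;=\; 2^n \braket{\phi|\rho|\phi} \;\geq\; 2^n \eps_1,
\]
so Cauchy--Schwarz yields $\sum_{P \in S(\phi)} \tr(\rho P)^2 \geq 2^n \eps_1^2$. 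Moreover $\calF(\rho) \geq \eps_1$ implies $\lambda_{\max}(\rho) \geq \eps_1$ and hence $\tr(\rho^2) \geq \eps_1^2$, so a Bell-difference sample lands in $S(\phi)$ with conditional probability at least $\eps_1^2 / \tr(\rho^2) \geq \eps_1^2$. In short, high stabilizer fidelity forces the Pauli spectrum to concentrate a polynomial fraction of its mass on some Lagrangian subspace of $\F_2^{2n}$.

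The algorithm then mirrors the learner of \cite{grewal2023improved}: (i) collect $k = \operatorname{poly}(1/\eps_1)$ Bell-difference samples in time $O(nk)$; (ii) by Gaussian elimination on the symplectic representations of the samples in $\F_2^{2n}$, identify a candidate Lagrangian subspace $\Lambda$ that contains an appreciable fraction of them; (iii) for each of $n$ symplectic generators $P$ of $\Lambda$, set the stabilizer sign to $\sgn(\tr(\rho P))$ as estimated from $O(1/\eps_1^2)$ single-copy Pauli measurements, yielding a candidate stabilizer state $\ket{\psi}$; (iv) estimate $\braket{\psi|\rho|\psi}$ to precision $O(\eps_1)$ by standard direct fidelity estimation (sample $P$ uniformly from $S(\psi)$, measure $\tr(\rho P)$, average), and accept iff the estimate exceeds a threshold strictly between $\eps_2$ and $\eps_1$.

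The hard part is the soundness analysis of step (ii): we must show that when $\calF(\rho) \leq \eps_2$, no Lagrangian subspace accrues enough Bell-sampling mass to fool the extraction routine. For pure inputs, the relevant Lagrangian-mass-to-fidelity conversions were recently established in \cite{arunachalam2024toleranttestingstabilizerstates, bao2024toleranttestingstabilizerstates, mehraban2024improvedboundstestinglow}. The difficulty in the mixed case is that $\tr(\rho^2)$ is an uncontrolled parameter and the non-Lagrangian Pauli mass contributed by mixedness itself can masquerade as the non-stabilizer signal those bounds are calibrated to detect. Our plan is to apply the pure-state bounds to the conditional distribution $\tr(\rho P)^2 / (2^n \tr(\rho^2) - 1)$ on $P \neq I$, track the polynomial loss in $\eps_1$ this incurs, absorb it into the $\eps_2 \leq \eps_1^{O(1)}$ slack, and use step (iv) as a safety net that catches any false candidate.
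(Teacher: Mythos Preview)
Your proposal follows a \emph{learn-then-verify} strategy: extract a candidate stabilizer state from Bell-difference samples and check it by direct fidelity estimation. This is a genuinely different route from the paper, which never produces a candidate at all. The paper instead builds a single scalar estimator: Bell-difference sample $x\sim q_\rho$, then run an ancilla-free SWAP test between $\rho$ and $W_x\rho W_x$ to get an unbiased $\pm 1$ estimate of $4^n\hat p_\rho(x)$; the resulting mean is $\eta=4^n\sum_x p_\rho(x)^3$. Completeness is $\eta\ge\calF(\rho)^6$ (Cauchy--Schwarz plus H\"older), and soundness is $\calF(\rho)\ge\Omega(\eta^{1089})$ via an additive-combinatorics argument carried out on $\hat p_\rho$ rather than $p_\rho$. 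One then thresholds on $\eta$.

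Your proposal has a real gap, and you have located the difficulty on the wrong side. With the DFE verification in step~(iv), soundness is \emph{easy}: if $\calF(\rho)\le\eps_2$ then every stabilizer state has fidelity at most $\eps_2$, so whatever candidate step~(ii) produces will fail the check. The hard direction is \emph{completeness}: when $\calF(\rho)\ge\eps_1$ you must show that steps (ii)--(iii) actually output a stabilizer state with fidelity above the threshold. Your Cauchy--Schwarz argument gives only that $\sum_{P\in S(\phi)}\tr(\rho P)^2\ge 2^n\eps_1^2$, i.e.\ a $\poly(\eps_1)$ fraction of Bell-difference mass lies in the correct Lagrangian. The remaining $1-\poly(\eps_1)$ fraction of samples can land anywhere in $\F_2^{2n}$, so ``Gaussian elimination on the samples'' will not isolate $S(\phi)$; it will typically produce the full space. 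No efficient procedure is known for extracting a good Lagrangian from a sample set that is only $\poly(\eps_1)$-concentrated on it, and this is precisely why all of the cited tolerant testers \cite{arunachalam2024toleranttestingstabilizerstates,bao2024toleranttestingstabilizerstates,mehraban2024improvedboundstestinglow} threshold on a Gowers-type scalar rather than attempt to learn.

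A secondary issue: Bell difference sampling on $\rho^{\otimes 4}$ does not sample from $p_\rho/\tr(\rho^2)$, conditioned or otherwise; it samples from $q_\rho$, whose symplectic Fourier transform is $p_\rho^2$. Two-copy Bell sampling on $\rho\otimes\rho$ also does not give $p_\rho$ (that would require access to the conjugate state). This does not affect the Cauchy--Schwarz completeness bound, but it does mean the distributional statements in your first paragraph are not correct as written.
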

	
	Our new test and accompanying proof has the following properties:
	\begin{enumerate}
		\item Performs identically to that of \cite{gross2021schur} and follow-up work \cite{grewal_et_al:LIPIcs.ITCS.2023.64,grewal2023improved,arunachalam2024toleranttestingstabilizerstates,arunachalam2024notepolynomialtimetoleranttesting,bao2024toleranttestingstabilizerstates,mehraban2024improvedboundstestinglow} on pure states,
		\item Provides equivalent scaling in terms of stabilizer fidelity compared to that of recent art \cite{grewal2023improved,arunachalam2024notepolynomialtimetoleranttesting,bao2024toleranttestingstabilizerstates}, even when given mixed states.
		\item Maintains the same asymptotic runtime per iteration and hardware requirements as the previous tests.
	\end{enumerate}
	We establish this result by generalizing and modifying the proofs of tolerant stabilizer testing in the pure state input setting of \cite{grewal2023improved,arunachalam2024toleranttestingstabilizerstates,arunachalam2024notepolynomialtimetoleranttesting, bao2024toleranttestingstabilizerstates} to when the input state is mixed.
	We also show that the symplectic Fourier spectrum of the Pauli decomposition for mixed states has interesting properties that likely warrant further study.
	
	We note that the GNW test has additional nice properties such as 1) uses $6$ copies per iteration, 2) runs in linear time per iteration, 3) uses only $2$ copies at any given time, 4) only uses Clifford gates, and 5) no additional ancilla qubits.
	Our new test matches the GNW test in these properties.
	Additionally, for any pure input state $\ket \psi$, the two tests have identical acceptance probabilities, so that any future updates to pure state input analyses will immediately extend to this test.
	
	\begin{remark}
		The bounds that we achieve are equivalent to \cite{grewal_et_al:LIPIcs.ITCS.2023.64,grewal2023improved} in the completeness case, and those of \cite{arunachalam2024toleranttestingstabilizerstates,arunachalam2024notepolynomialtimetoleranttesting,bao2024toleranttestingstabilizerstates} in the soundness case.\footnote{They also match \cite{gross2021schur,grewal2023improved} in the close regime. See \cref{appendix:close}.}
		That said, there has been no real attempt made by any work yet (including this one) to optimize these constants.
		
		During the writing this note,  \cite{mehraban2024improvedboundstestinglow} has since come out with a better dependence on the exponent for soundness.
		It is not immediately clear (to the authors) how to generalize their proofs to the mixed state case, since it (at the intuitive level) relies on a connection with phase states, which are inherently pure states.
	\end{remark}

	\paragraph{Related Works}
	After writing this note, we became aware that the measurement that our test implements is identical to that of \cite[Algorithm 1]{Haug2024efficient} with $n=3$.
	However, their analysis is solely for pure states, as they use the property that $p_\Psi$, the squares of the coefficients in the Pauli decomposition, forms a distribution to then study the entropy of $p_\Psi$ (i.e., the stabilizer entropy).
	Much of their analysis is very specific to properties of $p_\Psi$ that do not hold when moving to $p_\rho$.

	\section{Preliminaries}
	\subsection{Symplectic Fourier Analysis}
	
	\begin{definition}[Symplectic Product]
		For $x = (a, b) \in \F_2^{2n}$ and $y = (c, d) \in \F_2^{2n}$, the symplectic product \[[x, y] \coloneqq a\cdot d + b \cdot c. \]
	\end{definition}
	
	\begin{definition}[Symplectic Complement]
		Let $A \subseteq \F_2^{2n}$ be a subspace. The \emph{symplectic complement} of $A$, denoted by $A^\sympcomp$, is defined as
		\[
		A^\sympcomp \coloneqq \{x \in \F_2^{2n} : \forall a \in A, [x, a] = 0\}.
		\]
	\end{definition}
	
	An isotropic subspace is any subspace $V \subset \F_2^{2n}$ where $V \subseteq V^\sympcomp$.
	A subspace is Lagrangian if $V = V^\sympcomp$.
	Finally, a subspace is symplectic if $V \cap V^{\sympcomp} = \{0^{2n}\}$.
	
	\begin{definition}[Symplectic Fourier transform]
		Let $f : \F_2^{2n} \rightarrow \mathbb{R}$. We define the symplectic transform of $f$ by $\hat{f} : \F_2^{2n} \rightarrow \mathbb{R}$ such that
		\[
		\hat{f}(a) \coloneqq \frac{1}{4^n} \sum_{x \in \F_2^{2n}} (-1)^{[a, x]} f(x).
		\]
	\end{definition}
	
	It follows that $f(x) = \sum_{a \in \F_2^{2n}} (-1)^{[a, x]}\hat{f}(a)$.
	
	We require the following facts that resemble their standard Boolean Fourier analysis counterparts.
	See \cite[Section 2.2]{grewal2023improved} for omitted proofs.

	\begin{fact}[Plancherel's Theorem]\label{fact:plancherel}
		\[
		\frac{1}{4^n} \sum_{x \in \F_2^{2n}} f(x) g(x) = \sum_{x \in  \F_2^{2n}} \widehat{f}(x)\widehat{g}(x).
		\]
	\end{fact}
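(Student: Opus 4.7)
The plan is to prove Plancherel's identity by starting from the right-hand side, expanding both Fourier transforms via their definitions, swapping the order of summation, and then collapsing the inner sum using the orthogonality of symplectic characters.

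First, I would substitute the definition of the symplectic Fourier transform twice on the right-hand side, writing
\[
\sum_{x \in \F_2^{2n}} \hat{f}(x)\hat{g}(x) = \frac{1}{16^n} \sum_{x \in \F_2^{2n}} \sum_{y, z \in \F_2^{2n}} (-1)^{[x,y] + [x,z]} f(y) g(z).
\]
Using bilinearity of the symplectic product (so that $[x,y] + [x,z] = [x, y+z]$ over $\F_2$), I would then swap the order of summation to obtain
\[
\frac{1}{16^n} \sum_{y, z \in \F_2^{2n}} f(y) g(z) \sum_{x \in \F_2^{2n}} (-1)^{[x, y+z]}.
\]

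The key step is the orthogonality relation: for any $w \in \F_2^{2n}$,
\[
\sum_{x \in \F_2^{2n}} (-1)^{[x, w]} = \begin{cases} 4^n & \text{if } w = 0^{2n}, \\ 0 & \text{otherwise.} \end{cases}
\]
This follows from the fact that the symplectic form on $\F_2^{2n}$ is a non-degenerate bilinear form, so when $w \neq 0$ the linear functional $x \mapsto [x, w]$ is a non-trivial homomorphism $\F_2^{2n} \to \F_2$, hence balanced. (Explicitly, if $w = (c,d) \neq 0$, then $[x,w] = a \cdot d + b \cdot c$ for $x = (a,b)$, which depends non-trivially on at least one coordinate of $x$.) Applying this with $w = y + z$ and noting that $y + z = 0$ iff $y = z$ in $\F_2^{2n}$, the sum collapses to
\[
\frac{1}{16^n} \sum_{y \in \F_2^{2n}} f(y) g(y) \cdot 4^n = \frac{1}{4^n} \sum_{y \in \F_2^{2n}} f(y) g(y),
\]
which matches the left-hand side.

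The only non-routine ingredient is the orthogonality of symplectic characters, which is essentially the same as the standard orthogonality used in Boolean Fourier analysis, transported through the non-degeneracy of the symplectic form. I do not anticipate any serious obstacle; the argument is a direct translation of the standard Plancherel proof for the group $\F_2^{2n}$ with the dot product replaced by the symplectic product.
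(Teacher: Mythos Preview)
Your proof is correct and is exactly the standard argument. The paper itself omits the proof of this fact (deferring to \cite[Section 2.2]{grewal2023improved}), and your expand--swap--collapse computation via character orthogonality is precisely what that reference contains; note also that your orthogonality relation is the special case $T = \F_2^{2n}$ of \cref{lemma:sum-over-characters}.
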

	
	\begin{definition}[Convolution]
		Let $f, g : \F_2^{2n} \to \mathbb{R}$. Their convolution is the function $f \ast g:\F_2^{2n} \to \mathbb{R}$ defined by 
		\[
		(f \ast g)(x) = \Ex_{t \sim \F_2^{2n}}[f(t) g(t + x)] = \frac{1}{4^n} \sum_{t \in \F_2^{2n}} f(t) g(t + x).
		\]
	\end{definition}
	
	\begin{fact}[{\cite[Proposition 2.10]{grewal2023improved}}]\label{thm:convolution-theorem}
		Let $f, g: \F_2^{2n} \to \mathbb{R}.$ Then for all $a \in \F_2^{2n}$, 
		\[
		\wh{f \ast g}(a) = \wh{f}(a) \wh{g}(a).
		\]
	\end{fact}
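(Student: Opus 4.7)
The plan is a direct computation from the definitions, mirroring the standard proof of the Boolean Fourier convolution theorem. The only ingredient beyond pure bookkeeping is the bilinearity of the symplectic product, which follows immediately from its definition: if $a = (a_1, a_2)$ and we split $x+y$ componentwise over $\F_2^{2n}$, then $[a, x+y] = a_1 \cdot (x_2 + y_2) + a_2 \cdot (x_1 + y_1) = [a, x] + [a, y]$. This is the only structural fact I will need.

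First I would unfold both the definition of the symplectic Fourier transform and the definition of convolution:
\[
\widehat{f \ast g}(a) \;=\; \frac{1}{4^n} \sum_{x \in \F_2^{2n}} (-1)^{[a,x]} (f \ast g)(x) \;=\; \frac{1}{4^{2n}} \sum_{x,t \in \F_2^{2n}} (-1)^{[a,x]} f(t)\, g(t+x).
\]
Next I would perform the change of variables $s = t + x$ in the inner sum (which is a bijection of $\F_2^{2n}$ for each fixed $t$, since addition is its own inverse in characteristic $2$), so $x = s + t$ and the double sum becomes a sum over independent $s, t \in \F_2^{2n}$.

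Applying bilinearity then gives $(-1)^{[a, s+t]} = (-1)^{[a,s]} (-1)^{[a,t]}$, so the summand factorizes and I can separate the double sum as a product:
\[
\widehat{f \ast g}(a) \;=\; \left( \frac{1}{4^n} \sum_{t \in \F_2^{2n}} (-1)^{[a,t]} f(t) \right) \left( \frac{1}{4^n} \sum_{s \in \F_2^{2n}} (-1)^{[a,s]} g(s) \right) \;=\; \widehat{f}(a)\, \widehat{g}(a).
\]

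There is no real obstacle here — the proof is essentially forced once bilinearity of $[\cdot,\cdot]$ is noted. The only place to be careful is the change of variables step, where one must remember that over $\F_2$ we have $-t = t$, so $x = s+t$ rather than $s-t$; this is harmless but worth stating explicitly to avoid confusion with the analogous proof over $\mathbb{Z}$.
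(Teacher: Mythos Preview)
Your proof is correct and is exactly the standard direct computation one would expect. The paper itself does not give a proof of this fact; it simply cites \cite[Proposition 2.10]{grewal2023improved} and omits the argument, so there is nothing further to compare.
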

	
	\begin{lemma}[{\cite[Lemma 2.11]{grewal2023improved}}]\label{lemma:sum-over-characters}
		For any subspace $T \subseteq \F_2^{2n}$ and a fixed $x \in \F_2^{2n}$,
		\[
		\sum_{a \in T} (-1)^{[a, x]} = \abs{T} \cdot \indic{x \in T^\sympcomp}.
		\]
	\end{lemma}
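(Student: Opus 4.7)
The plan is to split the proof into two cases depending on whether $x \in T^\sympcomp$, and in each case reduce to an elementary character-sum computation. Note that for fixed $x$, the map $\chi_x : T \to \F_2$ defined by $\chi_x(a) = [a, x]$ is $\F_2$-linear (since the symplectic product is bilinear), so $(-1)^{\chi_x(\cdot)}$ is a group homomorphism from $(T, +)$ into $\{\pm 1\}$. The classical orthogonality principle for characters of finite abelian groups tells us this sum is $|T|$ if the character is trivial and $0$ otherwise; I will just give a direct proof of this specialized to our setting.

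For the first case, suppose $x \in T^\sympcomp$. Then by definition of the symplectic complement, $[a, x] = 0$ for every $a \in T$, so every summand equals $+1$ and the total is exactly $|T|$, matching the right-hand side.

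For the second case, suppose $x \notin T^\sympcomp$. Then there exists some $a_0 \in T$ with $[a_0, x] = 1$. I will use the standard translation trick: since $T$ is an $\F_2$-subspace containing $a_0$, the map $a \mapsto a + a_0$ is a bijection on $T$. Letting $S$ denote the sum in question and using bilinearity of the symplectic product,
\[
S \;=\; \sum_{a \in T} (-1)^{[a+a_0,\, x]} \;=\; \sum_{a \in T} (-1)^{[a, x] + [a_0, x]} \;=\; (-1)^{[a_0, x]} \sum_{a \in T} (-1)^{[a, x]} \;=\; -S,
\]
so $2S = 0$ and hence $S = 0$, as claimed.

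There is essentially no hard step here; the only thing to be careful about is ensuring that the reindexing $a \mapsto a + a_0$ stays inside $T$, which is precisely where the subspace hypothesis on $T$ (as opposed to just a subset) gets used. Combining the two cases gives the indicator expression $\abs{T} \cdot \indic{x \in T^\sympcomp}$ on the right-hand side.
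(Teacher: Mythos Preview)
Your proof is correct. Note, however, that the paper does not actually supply its own proof of this lemma: it is quoted from \cite[Lemma~2.11]{grewal2023improved} with the remark ``See \cite[Section~2.2]{grewal2023improved} for omitted proofs.'' Your argument---splitting on whether $x \in T^\sympcomp$ and, in the nontrivial case, using the translation $a \mapsto a + a_0$ by an element $a_0 \in T$ with $[a_0,x]=1$ to show the sum equals its own negative---is the standard character-orthogonality computation and is exactly what one would expect the cited proof to be.
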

	
	\begin{fact}\label{fact:duality}
		For any subspace $T \subseteq \F_2^{2n}$ and $f : \F_2^{2n} \rightarrow \mathbb{R}$,
		\[
		\sum_{x \in T} f(x) = \abs{T}\sum_{x \in T^\sympcomp} \hat{f}(x)
		\]
	\end{fact}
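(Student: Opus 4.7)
The plan is to unfold $f$ via the inverse symplectic Fourier transform, interchange the two sums, and then collapse the inner character sum using \cref{lemma:sum-over-characters}. Concretely, using the identity $f(x) = \sum_{a \in \F_2^{2n}} (-1)^{[a,x]}\wh{f}(a)$ noted immediately after the definition of the symplectic Fourier transform, I would write
\[
\sum_{x \in T} f(x) \;=\; \sum_{x \in T}\sum_{a \in \F_2^{2n}} (-1)^{[a,x]} \wh{f}(a) \;=\; \sum_{a \in \F_2^{2n}} \wh{f}(a) \sum_{x \in T} (-1)^{[a,x]},
\]
where swapping the order of summation is justified because both sums are finite.

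Next, I would apply \cref{lemma:sum-over-characters} to the inner sum. That lemma is stated as $\sum_{a \in T}(-1)^{[a,x]} = \abs{T}\cdot \indic{x \in T^\sympcomp}$, but since the symplectic form is symmetric (over $\F_2$, $[a,x] = a_1\cdot x_2 + a_2 \cdot x_1 = [x,a]$ when we write $a=(a_1,a_2)$ and $x=(x_1,x_2)$), the same identity applied with the roles of the summation variable and the fixed vector swapped gives $\sum_{x \in T}(-1)^{[a,x]} = \abs{T}\cdot \indic{a \in T^\sympcomp}$. Substituting this into the display above yields
\[
\sum_{x \in T} f(x) \;=\; \sum_{a \in \F_2^{2n}} \wh{f}(a)\cdot \abs{T}\cdot \indic{a \in T^\sympcomp} \;=\; \abs{T} \sum_{a \in T^\sympcomp} \wh{f}(a),
\]
which is exactly the claimed identity after relabelling $a$ as $x$.

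There is no real obstacle here; the only subtlety is noting the symmetry of the symplectic product so that \cref{lemma:sum-over-characters} applies to a sum over $x$ (rather than over the first argument $a$). Everything else is a one-line application of Fourier inversion and the finiteness of the ambient group $\F_2^{2n}$.
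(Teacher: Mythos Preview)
Your proof is correct and matches the paper's own proof essentially line for line: expand $f$ via Fourier inversion, swap the finite sums, and apply \cref{lemma:sum-over-characters} to collapse the inner character sum. The only addition you make is the explicit remark about the symmetry of the symplectic product, which the paper leaves implicit.
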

	\begin{proof}
		\begin{align*}
			\sum_{x \in T} f(x) &= \sum_{x \in T} \sum_{a \in \F_2^{2n}} (-1)^{[a, x]} \hat{f}(a)\\
			&=  \sum_{a \in \F_2^{2n}}  \hat{f}(a)\sum_{x \in T}(-1)^{[a, x]}\\
			&=  \abs{T} \sum_{a \in T^\sympcomp} \hat{f}(a). && (\mathrm{\cref{lemma:sum-over-characters}}) \qedhere
		\end{align*}
	\end{proof}
	
	\begin{fact}\label{fact:double-hat}
		For all $f : \F_2^{2n} \rightarrow \mathbb{R}$,
		\[\wh{\wh{f}} = \frac{f}{4^n}.\]
	\end{fact}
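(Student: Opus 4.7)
The plan is a direct two-step unfolding of the definitions, mirroring the standard Boolean Fourier inversion argument but with the symplectic character $(-1)^{[a,x]}$ in place of $(-1)^{a \cdot x}$.

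First, I would write
\[
\wh{\wh{f}}(y) \;=\; \frac{1}{4^n} \sum_{a \in \F_2^{2n}} (-1)^{[y,a]} \, \wh{f}(a) \;=\; \frac{1}{16^n} \sum_{a \in \F_2^{2n}} \sum_{x \in \F_2^{2n}} (-1)^{[y,a]+[a,x]} f(x),
\]
swap the order of summation, and use bilinearity together with the symmetry of the symplectic product over $\F_2$ (namely $[y,a] = [a,y]$, visible directly from the definition since $a \cdot d + b \cdot c = c \cdot b + d \cdot a$) to combine characters as $(-1)^{[a, x+y]}$.

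Second, I would apply \cref{lemma:sum-over-characters} with the full space $T = \F_2^{2n}$. Since the symplectic form is nondegenerate, $(\F_2^{2n})^\sympcomp = \{0\}$, so the inner sum $\sum_{a \in \F_2^{2n}} (-1)^{[a, x+y]}$ equals $4^n \cdot \indic{x = y}$. Substituting back collapses the sum over $x$ to the single term $x = y$, giving
\[
\wh{\wh{f}}(y) \;=\; \frac{1}{16^n} \cdot 4^n \cdot f(y) \;=\; \frac{f(y)}{4^n},
\]
which is the claimed identity. (Alternatively, one could derive the same identity by applying \cref{fact:duality} to suitable translates of $f$, but the direct computation above is the cleanest route.)

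There is no real obstacle: the only subtlety is confirming the symmetry $[y,a] = [a,y]$ so that the two exponents combine as $[a, x+y]$, and recognizing that the symplectic complement of the full space is trivial so that \cref{lemma:sum-over-characters} isolates $x = y$. Everything else is bookkeeping of the factors of $4^n$.
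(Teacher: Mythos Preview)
Your proposal is correct and is essentially the same argument as the paper's: unfold the double transform, combine the exponents into $(-1)^{[a,\,x+y]}$, and apply \cref{lemma:sum-over-characters} with $T=\F_2^{2n}$ to collapse the inner sum. The only cosmetic difference is that the paper writes both characters with $a$ in the first slot from the outset, so it never needs to invoke the symmetry $[y,a]=[a,y]$ explicitly.
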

	\begin{proof}
		\begin{align*}
			\hat{\hat{f}}(x) &= \frac{1}{4^n}\sum_{a \in \F_2^{2n}} (-1)^{[a, x]} \hat{f}(a)\\
			&= \frac{1}{16^n}\sum_{a,b  \in \F_2^{2n}} (-1)^{[a, x + b]} f(b)\\
			&= \frac{1}{4^n} f(x) && (\mathrm{\cref{lemma:sum-over-characters}}) \qedhere
		\end{align*}
	\end{proof}
	
	\subsection{Weyl Operators}
	
	For $x = (a, b) \in \F_2^{2n}$ where $a$ and $b$ are the first and last $n$ bits of $x$, respectively, let $W_x$ refer to the Weyl operator
	\[
	W_x \coloneqq i^{a^\prime \cdot b^\prime} \bigotimes_{j=1}^n X^{a_j} Z^{b_j},
	\]
	where $a^\prime$ and $b^\prime$ are the embedding of $a$ and $b$ into $\mathbb{Z}^n$ respectively.
	The Weyl operators form an orthogonal basis for $\mathbb{C}^{2^n \times 2^n}$ so they can be used to give the Weyl decomposition of any Hermitian matrix
	\[
	H = \frac{1}{2^n} \sum_{x \in \F_2^{2n}} \trace\left(W_x H \right) W_x.
	\]
	
	The following is a well-known fact about commutation relationships of Pauli/Weyl operators that ties them to the symplectic product.
	
	\begin{fact}\label{fact:weyl-anticommute}
		\[W_a W_x W_a = (-1)^{[a, x]} W_x \]
	\end{fact}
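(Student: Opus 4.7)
The plan is to reduce the identity to a qubit-wise computation and then invoke the basic single-qubit anticommutation $ZX = -XZ$. To avoid the notational collision with the $a$ in the definition of $W_x$, write $a = (u, v) \in \F_2^{2n}$ and $x = (c, d) \in \F_2^{2n}$, so that the symplectic product becomes $[a, x] = u \cdot d + v \cdot c$. Factor each Weyl operator into its scalar phase and a pure tensor product of Paulis:
\[
W_a = i^{u' \cdot v'} \, P_a, \qquad P_a := \bigotimes_{j=1}^{n} X^{u_j} Z^{v_j},
\]
and similarly $W_x = i^{c' \cdot d'} P_x$. Since the phases are scalars,
\[
W_a W_x W_a = \bigl(i^{u' \cdot v'}\bigr)^2 \, i^{c' \cdot d'} \, P_a P_x P_a = (-1)^{u \cdot v} \, i^{c' \cdot d'} \, P_a P_x P_a,
\]
so it suffices to compute $P_a P_x P_a$ and absorb the residual sign into the overall factor.

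Next, because $P_a$ and $P_x$ are tensor products, the conjugation $P_a P_x P_a$ splits qubit-wise, and on the $j$-th qubit it reduces to
\[
X^{u_j} Z^{v_j} \, X^{c_j} Z^{d_j} \, X^{u_j} Z^{v_j}.
\]
Using $Z^{d} X^{u} = (-1)^{u d} X^{u} Z^{d}$ (and similarly for swapping $Z^{v}$ past $X^{c}$, and $Z^{v}$ past $X^{u}$ on the second pass), a short, mechanical commutation produces
\[
X^{u_j} Z^{v_j} X^{c_j} Z^{d_j} X^{u_j} Z^{v_j} = (-1)^{u_j d_j + v_j c_j + u_j v_j} \, X^{c_j} Z^{d_j},
\]
since all the $X^{u_j}$ and $Z^{v_j}$ factors cancel modulo $2$ after being collected. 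Tensoring these single-qubit identities over all $j$ gives
\[
P_a P_x P_a = (-1)^{\sum_j (u_j d_j + v_j c_j + u_j v_j)} P_x = (-1)^{[a, x] + u \cdot v} P_x.
\]

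Finally, substitute back into the expression for $W_a W_x W_a$: the $(-1)^{u \cdot v}$ from squaring the $W_a$ phase cancels the $(-1)^{u \cdot v}$ coming from $P_a P_x P_a$, and the surviving $i^{c' \cdot d'} P_x$ is exactly $W_x$, leaving the prefactor $(-1)^{[a,x]}$ as desired. The computation is routine, and the main bookkeeping challenge — making sure the phases $i^{u' \cdot v'}$ and $i^{c' \cdot d'}$ are tracked correctly alongside the $(-1)^{u \cdot v}$ produced by $P_a^2$ on each qubit — is handled precisely by the cancellation above. No deeper structural input is needed.
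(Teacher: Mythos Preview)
Your proof is correct. The paper does not supply a proof of this fact at all --- it is introduced as ``a well-known fact about commutation relationships of Pauli/Weyl operators that ties them to the symplectic product'' and then simply used. Your explicit qubit-wise verification, with the careful tracking of the phase $(-1)^{u\cdot v}$ coming from $(i^{u'\cdot v'})^2$ and its cancellation against the extra $(-1)^{u\cdot v}$ generated by the single-qubit commutations, is a clean and complete argument for a fact the paper treats as background.
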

	
	There is a version of Plancherel's theorem for the Weyl decomposition.
	\begin{fact}\label{fact:parseval-weyl}
		For matrices $A$ and $B$
		\[
		\trace(A^\dagger B) = \frac{1}{2^n}\sum_{x \in \F_2^{2n}} \trace(A^\dagger W_x) \cdot \trace(B W_x).
		\]
	\end{fact}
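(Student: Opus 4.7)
The plan is to derive the identity directly from the Weyl decomposition by expanding $B$ in the Weyl basis and using linearity of trace. The main preliminary I need is that the Weyl decomposition stated in the preliminaries extends from Hermitian operators to arbitrary $B \in \mathbb{C}^{2^n \times 2^n}$, which reduces to establishing orthogonality of the Weyl operators under the Hilbert--Schmidt inner product.

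For the orthogonality step, I would argue $\trace(W_x W_y) = 2^n \indic{x = y}$ as follows. Each single-qubit factor $i^{a_j b_j} X^{a_j} Z^{b_j}$ is Hermitian and squares to the identity, so $W_x$ is a Hermitian unitary; in the $x = y$ case this gives $W_x^2 = I$ and the trace is $2^n$. When $x \ne y$, nondegeneracy of the symplectic product yields some $a \in \F_2^{2n}$ with $[a, x + y]$ odd, and inserting $W_a^2 = I$ and applying \cref{fact:weyl-anticommute} twice gives
\[
    W_a (W_x W_y) W_a = (W_a W_x W_a)(W_a W_y W_a) = (-1)^{[a, x+y]} W_x W_y = -W_x W_y,
\]
so cyclicity of trace forces $\trace(W_x W_y) = -\trace(W_x W_y) = 0$. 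Since there are exactly $4^n$ pairwise orthogonal Weyl operators in a $4^n$-dimensional space, they form a basis, and hence for any $B$ we have $B = \frac{1}{2^n} \sum_{x \in \F_2^{2n}} \trace(W_x B)\, W_x$.

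The final step is then a one-line substitution: plugging this expansion into $\trace(A^\dagger B)$, pulling out the scalar traces, and using $\trace(W_x B) = \trace(B W_x)$ yields
\[
    \trace(A^\dagger B) = \trace\!\left( A^\dagger \cdot \frac{1}{2^n} \sum_{x \in \F_2^{2n}} \trace(W_x B)\, W_x \right) = \frac{1}{2^n} \sum_{x \in \F_2^{2n}} \trace(A^\dagger W_x) \cdot \trace(B W_x),
\]
as claimed. I do not foresee any real obstacle; the only mildly subtle point is that the preliminaries state the Weyl decomposition only for Hermitian inputs, and the orthogonality argument above handles the non-Hermitian case uniformly.
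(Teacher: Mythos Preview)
Your argument is correct. The paper does not actually supply a proof of this fact; it is stated as a well-known identity in the preliminaries and used freely thereafter. Your derivation---establishing $\trace(W_x W_y) = 2^n \indic{x=y}$ via \cref{fact:weyl-anticommute} and nondegeneracy of the symplectic form, then expanding $B$ in the resulting orthogonal basis---is exactly the standard route one would take to justify it, and there are no gaps.
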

	
	We will note the normalized squares of the Weyl coefficients as \[p_H(x) \coloneqq \frac{1}{2^n}\trace^2\left(W_x H \right).\]
	
	From \cref{fact:parseval-weyl} one can see that for mixed state $\rho$, $\sum_{x \in \F_2^{2n}} p_\rho(x) = \tr(\rho^2)$
	and is thus only a distribution when $\rho$ is a pure state.
	Note also that $0 \leq p_\rho(x) \leq \frac{1}{2^n}$.

	The following is a generalization of \cite[Eqn 3.5]{gross2021schur} to general Hermitian matrices.
	
	\begin{lemma}\label{lem:p-hat}
		For Hermitian matrix $H$,
		\[
		\hat{p}_H(a) = \frac{1}{4^n} \tr(W_a \rho W_a \rho).
		\]
	\end{lemma}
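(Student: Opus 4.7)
The plan is to unwind $\hat p_H(a)$ directly from its definition and then collapse the sum using the two tools we have for Weyl operators: the anticommutation identity (\cref{fact:weyl-anticommute}) and the Weyl version of Parseval (\cref{fact:parseval-weyl}). (I'll read the right-hand side of the statement as $\tfrac{1}{4^n}\tr(W_a H W_a H)$; writing $\rho$ there appears to be a typo since the lemma is stated for a general Hermitian $H$.)

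First I would write
\[
\hat p_H(a) \;=\; \frac{1}{4^n}\sum_{x\in\F_2^{2n}}(-1)^{[a,x]}\,p_H(x) \;=\; \frac{1}{4^n\cdot 2^n}\sum_{x}(-1)^{[a,x]}\tr(W_x H)\,\tr(W_x H).
\]
The sign $(-1)^{[a,x]}$ is exactly what \cref{fact:weyl-anticommute} produces: since $W_a W_x W_a = (-1)^{[a,x]}W_x$ and $W_a$ is Hermitian, we have
\[
(-1)^{[a,x]}\tr(W_x H) \;=\; \tr(W_a W_x W_a H) \;=\; \tr\!\bigl(W_x\cdot (W_a H W_a)\bigr),
\]
using cyclicity of the trace in the second step. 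Substituting this into one of the two $\tr(W_x H)$ factors turns the sum into
\[
\hat p_H(a) \;=\; \frac{1}{4^n\cdot 2^n}\sum_{x}\tr\!\bigl(W_x\, (W_a H W_a)\bigr)\,\tr(W_x H).
\]

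Now I would apply \cref{fact:parseval-weyl} with $A^\dagger = W_a H W_a$ (which is Hermitian since $H$ and $W_a$ are) and $B = H$, so that
\[
\sum_{x}\tr\!\bigl(W_x W_a H W_a\bigr)\,\tr(W_x H) \;=\; 2^n\,\tr(W_a H W_a\cdot H),
\]
and plugging this in gives the claimed identity $\hat p_H(a) = \tfrac{1}{4^n}\tr(W_a H W_a H)$.

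There isn't really a conceptual obstacle here; the lemma is essentially a change of basis calculation. The only thing to be careful about is bookkeeping of the normalizations $\tfrac{1}{2^n}$ and $\tfrac{1}{4^n}$ (one from the definition of $p_H$, one from the symplectic Fourier transform, and one from \cref{fact:parseval-weyl}), and confirming that $W_a H W_a$ is self-adjoint so that \cref{fact:parseval-weyl} applies cleanly. This derivation parallels the pure-state computation in \cite{gross2021schur}; the point is that Hermiticity of $H$ is all that is used, so the same identity holds for mixed states with $H=\rho$.
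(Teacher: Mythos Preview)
Your proof is correct and follows exactly the same route as the paper: expand $\hat p_H(a)$ via the definition, absorb the sign into one trace factor using \cref{fact:weyl-anticommute}, and then collapse the sum with \cref{fact:parseval-weyl}. Your observation that the right-hand side should read $H$ rather than $\rho$ is also correct, and your worry about self-adjointness is unnecessary since \cref{fact:parseval-weyl} is stated for arbitrary matrices.
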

	\begin{proof}
		\begin{align*}
			\hat{p}_H(a) \coloneqq& \frac{1}{4^n} \sum_{x \in \F_2^{2n}} (-1)^{[a, x]} p_H(x)\\
			=& \frac{1}{8^n} \sum_{x \in \F_2^{2n}} (-1)^{[a, x]} \tr(W_x H) \tr(W_x H) \\
			=&  \frac{1}{8^n} \sum_{x \in \F_2^{2n}} \tr(W_a W_x W_a H) \tr(W_x H) && (\mathrm{\cref{fact:weyl-anticommute}})\\
			=& \frac{1}{4^n} \tr(W_a H W_a H). && (\mathrm{\cref{fact:parseval-weyl}}) \qedhere
		\end{align*}
	\end{proof}
	
	\begin{remark}\label{remark:p-hat-pure}
		For rank $1$ matrices (i.e., pure states) it follows that $\hat{p}_H = \frac{p_H}{2^n}$, as famously pointed out by \cite{gross2021schur}.
		This is untrue for mixed states in general, leading to the general challenge of replacing the proofs in previous literature without using this very useful identity.
	\end{remark}
	
	The following two facts establish that $2^n \hat{p}_\rho$ is a distribution.
	
	\begin{fact}\label{fact:p-hat-non-negative}
		For mixed state $\rho$,
		\[
		0 \leq \hat{p}_\rho(x) \leq \frac{1}{4^n}
		\]
		for all $x \in \F_2^{2n}$.
	\end{fact}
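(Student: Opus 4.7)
The plan is to reduce the statement to a basic inequality about traces of density matrices via \cref{lem:p-hat}, which gives $\hat{p}_\rho(a) = \frac{1}{4^n}\tr(W_a \rho W_a \rho)$. So it suffices to show that $0 \leq \tr(W_a \rho W_a \rho) \leq 1$ for every $a \in \F_2^{2n}$. The key structural observation is that the phase $i^{a' \cdot b'}$ in the definition of $W_x$ is chosen precisely so that each Weyl operator is simultaneously Hermitian and unitary; in particular $W_a^\dagger = W_a$ and $W_a^2 = I$. Consequently $\sigma \coloneqq W_a \rho W_a = W_a^\dagger \rho W_a$ is itself a valid density matrix, since unitary conjugation preserves positivity and trace.

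For the lower bound I would note that the trace of the product of two positive semidefinite operators is nonnegative: since $\rho$ has a PSD square root, one can write $\tr(\sigma \rho) = \tr(\sqrt{\rho}\, \sigma \sqrt{\rho})$, and $\sqrt{\rho}\, \sigma \sqrt{\rho}$ is manifestly PSD. This gives $\hat{p}_\rho(a) \geq 0$.

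For the upper bound I would bound the Hilbert--Schmidt inner product of the two density matrices using the operator-norm inequality $\tr(\sigma \rho) \leq \lVert \sigma \rVert_{\mathrm{op}} \tr(\rho)$, combined with the fact that any density matrix has operator norm at most $1$ and trace exactly $1$. (Equivalently, one could invoke Cauchy--Schwarz to get $\tr(\sigma \rho) \leq \sqrt{\tr(\sigma^2)\tr(\rho^2)} \leq 1$ using that the purity is at most $1$.) Combining the two bounds yields $0 \leq \tr(W_a \rho W_a \rho) \leq 1$, and dividing by $4^n$ gives the claim.

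I do not expect any real obstacle here; the only subtle point is recognizing that $W_a$'s specific phase convention makes it Hermitian (and not merely unitary up to a phase), so that $W_a \rho W_a$ is genuinely a density matrix rather than something only PSD-similar. The lemma is essentially the mixed-state analogue of the observation from \cref{remark:p-hat-pure} that $\hat{p}_H$ is proportional to $p_H$ for pure states, but now with the weaker conclusion that $4^n \hat{p}_\rho$ lies in $[0,1]$ pointwise, which together with \cref{fact:plancherel} and $\tr(\rho^2) \leq 1$ will underpin the subsequent claim that $2^n \hat{p}_\rho$ is a probability distribution.
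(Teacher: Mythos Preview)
Your proof is correct and slightly different in flavor from the paper's. The paper argues by writing $\rho = \sum_i \alpha_i \ketbra{\psi_i}{\psi_i}$ and expanding $\tr(W_x \rho W_x \rho) = \sum_{i,j}\alpha_i \alpha_j \,\lvert\langle \psi_i \vert W_x \vert \psi_j\rangle\rvert^2$, then observing that each matrix element has modulus in $[0,1]$ and the weights $\alpha_i\alpha_j$ sum to $1$. You instead keep things basis-free: recognize $\sigma = W_a \rho W_a$ as a density matrix (using that the phase convention makes $W_a$ Hermitian and unitary), and then invoke the generic facts $\tr(\sigma\rho)\ge 0$ for PSD operators and $\tr(\sigma\rho)\le \lVert\sigma\rVert_{\mathrm{op}}\tr(\rho)\le 1$. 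Your route is a bit cleaner and more structural, while the paper's spectral expansion makes the dependence on the eigenvectors explicit; both hinge on the same subtle point you flagged, namely that $W_a$ is genuinely Hermitian so that $W_a\rho W_a$ (and the squared matrix elements in the paper's version) are nonnegative.
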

	\begin{proof}
		Let $\rho = \sum_{i} \alpha_i \ketbra{\psi_i}{\psi_i}$, such that $\alpha_i \geq 0$ and $\sum_i \alpha_i = 1$.
		\begin{align*}
			4^n \hat{p}_\rho(x) \coloneqq& \trace\left(W_x \rho W_x \rho\right)\\
			=& \trace\left(W_x \sum_{i} \alpha_i \ketbra{\psi_i}{\psi_i} W_x \sum_{i} \alpha_j \ketbra{\psi_j}{\psi_j}\right)\\
			=& \sum_{i, j} \alpha_i \alpha_j \trace\left(W_x \ketbra{\psi_i}{\psi_i} W_x \ketbra{\psi_j}{\psi_j} \right) && (\text{Linearity of Trace})\\
			=& \sum_{i, j} \alpha_i \alpha_j \abs{\braket{\psi_i | W_x | \psi_j}}^2.
		\end{align*}
		Observe that $0 \leq \abs{\braket{\psi_i | W_x | \psi_j}} \leq 1$, completing the proof.
	\end{proof}
	
	\begin{fact}\label{fact:p-hat-sum-1}
		For mixed state $\rho$,
		\[
		2^n \sum_{x \in \F_2^{2n}} \hat{p}_\rho(x) = 1.
		\]
	\end{fact}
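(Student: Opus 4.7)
The plan is to recognize the sum $\sum_{x \in \F_2^{2n}} \hat{p}_\rho(x)$ as a single value of the symplectic Fourier inversion of $p_\rho$. Using the inversion formula stated immediately after the definition of the transform, $f(x) = \sum_{a \in \F_2^{2n}} (-1)^{[a,x]}\hat{f}(a)$, and substituting $f = p_\rho$ together with $x = 0$, we obtain
\[
\sum_{x \in \F_2^{2n}} \hat{p}_\rho(x) \;=\; p_\rho(0).
\]

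I would then evaluate the right-hand side directly from the definition of $p_H$. Since $W_0 = i^{0}\bigotimes_j X^0 Z^0 = I$ and $\tr(\rho) = 1$, we have $p_\rho(0) = \tfrac{1}{2^n}\tr(W_0 \rho)^2 = \tfrac{1}{2^n}$. Multiplying through by $2^n$ yields the claim. The proof is essentially a one-line unpacking of definitions, so I do not anticipate any real obstacle; notably, nothing in the argument requires $\rho$ to be pure, so the same reasoning carries through uniformly for mixed states even though $p_\rho$ itself is not a probability distribution.

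If an operator-level derivation is preferred, the same conclusion follows from \cref{lem:p-hat} combined with the Weyl twirl $\sum_x W_x \rho W_x = 2^n \tr(\rho)\, I = 2^n I$, which in turn is a consequence of expanding $\rho$ in the Weyl basis and applying \cref{fact:weyl-anticommute} followed by \cref{lemma:sum-over-characters} with $T = \F_2^{2n}$ (so that $T^\sympcomp = \{0\}$). Taking the trace against $\rho$ gives $\sum_x \tr(W_x \rho W_x \rho) = 2^n \tr(\rho) = 2^n$, and dividing by $4^n$ as in \cref{lem:p-hat} recovers $\sum_x \hat{p}_\rho(x) = 2^{-n}$, equivalent to the statement.
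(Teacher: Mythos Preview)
Your proof is correct and essentially identical to the paper's: the paper applies \cref{fact:duality} with $T=\{0^{2n}\}$ to obtain $\sum_{x}\hat{p}_\rho(x)=p_\rho(0^{2n})=2^{-n}$, which is precisely your Fourier-inversion-at-zero argument written through the duality fact rather than the inversion formula directly. Your optional operator-level derivation via the Weyl twirl is a valid alternative but not needed.
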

	\begin{proof}
		Using \cref{fact:duality} with $T = \{0^{2n}\}$, we find that 
		\[2^n \sum_{x \in \F_2^{2n}} \hat{p}_\rho(x) = 2^n p_\rho(0^{2n}) = 1. \qedhere\]
	\end{proof}
	
	\subsection{Stabilizer States}
	The Clifford unitaries are any unitary that can be generated by Hadamard, Phase, and CNOT gates.
	The set of stabilizer states are any state that can be generated by Clifford unitaries from the all zeros state.
	A state $\rho$ is a stabilizer state if and only if the set \[\weyl(\rho) \coloneqq \{x \in \F_2^{2n} : \trace(W_x \rho) = \pm 1\}\] forms a Lagrangian subspace.
	Note that $\weyl(\rho)$ always forms an isotropic subspace, so the difference is whether or not it is Lagrangian as well.
	
	\begin{definition}[Stabilizer Fidelity]
		For any state $\rho$, the stabilizer fidelity is
		\[
		\calF(\rho) = \max_{\ket \phi}\braket{\phi | \rho | \phi}
		\]
		where the maximization is over all stabilzer states $\ket \phi$.
	\end{definition}
	
	Note that for any Clifford unitary $C$, $\calF(C \rho C^\dagger) = \calF(\rho)$, which we will use to simplify proofs.
	Observe also that $p_\rho(x) = p_{C\rho C^\dagger}(y)$ for $y \in \F_2^{2n}$ such that $W_y = \pm C W_x C^\dagger$.\footnote{The Clifford group is well-known to normalize the Pauli group.}

	\section{The Algorithm}
	
	Like many stabilizer state algorithms, we use Bell different sampling as a starting point \cite{gross2021schur}.
	
	\begin{definition}[Bell Difference Sampling]\label{def:bell-diff}
		Given $4$ copies of a state $\rho$ there exists a time algorithm to measure from the distribution $q_\rho$ defined as
		\[
		q_\rho(a) \coloneqq \frac{1}{4^n} \sum_{x \in \F_2^{2n}} (-1)^{[a, x]} \trace^4(W_x \rho).
		\]
		The algorithm uses only two-copy Clifford measurements and no ancilla qubits, and it runs in linear time.
	\end{definition}
	
	The symplectic Fourier decomposition neatly falls out of this identity.
	\begin{fact}\label{fact:q-hat}
		For all $x \in \F_2^{2n}$ and mixed state $\rho$:
		\[
		\hat{q}_\rho(x) = \frac{\trace^4(W_x \rho)}{4^n} = p_\rho(x)^2.\]
	\end{fact}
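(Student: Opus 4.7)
The statement is essentially an immediate consequence of \cref{fact:double-hat} combined with unpacking the definitions, so the plan is short.

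First, I would handle the second equality $\frac{\trace^4(W_x \rho)}{4^n} = p_\rho(x)^2$. This is nothing more than squaring the definition $p_\rho(x) = \frac{1}{2^n}\trace^2(W_x \rho)$: squaring both sides gives $p_\rho(x)^2 = \frac{1}{4^n}\trace^4(W_x \rho)$, matching the claim.

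Next, for the first equality, I would define the auxiliary function $g : \F_2^{2n} \to \mathbb{R}$ by $g(x) \coloneqq \trace^4(W_x \rho)$ and observe that the definition of Bell difference sampling in \cref{def:bell-diff} can be rewritten as
\[
    q_\rho(a) = \frac{1}{4^n} \sum_{x \in \F_2^{2n}} (-1)^{[a,x]} g(x) = \hat{g}(a),
\]
which is exactly the symplectic Fourier transform of $g$. Applying the Fourier transform once more and invoking \cref{fact:double-hat} yields
\[
    \hat{q}_\rho(x) = \widehat{\hat{g}}(x) = \frac{g(x)}{4^n} = \frac{\trace^4(W_x \rho)}{4^n},
\]
as desired. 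There is no real obstacle here; the content of the fact is really just that the strange-looking formula defining $q_\rho$ is the inverse symplectic Fourier transform of the fourth power of the Weyl coefficients, so one application of Fourier inversion recovers them. The only mild care required is to note that the convention used in \cref{def:bell-diff} matches the symplectic Fourier transform convention used throughout, so that \cref{fact:double-hat} applies verbatim.
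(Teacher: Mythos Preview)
Your proposal is correct and matches the paper's approach: the paper does not give a formal proof, merely noting that ``the symplectic Fourier decomposition neatly falls out of this identity,'' and your argument via \cref{fact:double-hat} is precisely the intended one-line justification.
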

	
	\begin{remark}
		Using \cref{thm:convolution-theorem}, one can see that $q_\rho = 16^n (\hat{p}_\rho \ast \hat{p}_\rho)$.
		Like with $p_\Psi$, it is unclear how to sample from $2^n \hat{p}_\rho$ without extra power (such as the complex conjugate state).
		However, Bell sampling \cite{montanaro-bell-sampling} (without the difference/convolution) actually samples from a related distribution $t_\rho$ whose symplectic Fourier coefficients are proportional to $p_\rho$ up to sign (i.e., $\hat{t}_\rho \propto \pm p_\rho$) \cite{damanik2018optimality}.
		The convolution squares the Fourier coefficients, thereby removing the $\pm 1$ phase, leaving us with $q_\rho$ as desired.
	\end{remark}

	Most algorithms on stabilizer property testing \cite{grewal_et_al:LIPIcs.ITCS.2023.64,grewal2023improved,arunachalam2024toleranttestingstabilizerstates,arunachalam2024notepolynomialtimetoleranttesting,bao2024toleranttestingstabilizerstates,mehraban2024improvedboundstestinglow} derive from that of \cite{gross2021schur}.
	This test accepts with probability \[\frac{1 + 2^n \sum_{x \in \F_2^{2n}} q_\rho(x) p_\rho(x)}{2}.\]
	Let $\eta_{\text{GNW}} \coloneqq 2^n \sum_{x \in \F_2^{2n}} q_\rho(x) p_\rho(x)$ be the bias of this test.
	When $\rho$ is pure, the use of \cref{fact:plancherel} and \cref{remark:p-hat-pure} show that this becomes $\eta_{\text{GNW}} = 4^n \sum_{x \in \F_2^{2n}} p_\rho(x)^3$, which is highly correlated to the Gowers norm \cite{arunachalam2024toleranttestingstabilizerstates}.
	
	Since $\hat{p}_\rho \neq \frac{p_\rho}{2^n}$ in general, we need to revise our test.
	The key is that we want to have a bias of \[
	\eta \coloneqq 4^n \sum_{x \in \F_2^{2n}} q_\rho(x) \hat{p}_\rho(x),
	\]
	which will become $4^n \sum_{x \in \F_2^{2n}} p_\rho(x)^3$ after using \cref{fact:plancherel}, \cref{fact:q-hat}, and \cref{fact:double-hat}.

	\begin{proposition}\label{prop:new-test}
		For all mixed states $\rho$
		\[\sum_{x \in \F_2^{2n}} q_\rho(x) \hat{p}_\rho(x) = \sum_{x \in \F_2^{2n}} p_\rho(x)^3.\]
	\end{proposition}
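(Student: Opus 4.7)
The plan is a direct chain of three identities already established in the preliminaries, applied in the order suggested by the paragraph preceding the proposition. The key observation is that the left-hand side is a sum of a product of two functions on $\F_2^{2n}$, which is exactly the setting where Plancherel's theorem converts it into a sum of products of their symplectic Fourier transforms. Since we happen to know clean expressions for both of those transforms, everything collapses to the desired right-hand side.

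More concretely, first I would apply Plancherel's theorem (\cref{fact:plancherel}) with $f = q_\rho$ and $g = \hat{p}_\rho$, which gives
\[
\frac{1}{4^n}\sum_{x \in \F_2^{2n}} q_\rho(x)\,\hat{p}_\rho(x) \;=\; \sum_{x \in \F_2^{2n}} \hat{q}_\rho(x)\,\widehat{\hat{p}_\rho}(x).
\]
Next I would substitute $\hat{q}_\rho(x) = p_\rho(x)^2$ from \cref{fact:q-hat}, and $\widehat{\hat{p}_\rho}(x) = p_\rho(x)/4^n$ from \cref{fact:double-hat}. These two substitutions turn the right-hand side into $\tfrac{1}{4^n}\sum_x p_\rho(x)^3$. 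Multiplying both sides by $4^n$ yields the proposition.

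I do not expect any genuine obstacle: the statement is purely an algebraic consequence of facts that have already been proved, and the proof is only a few lines. The one thing worth being careful about is the direction in which Plancherel is applied — we apply it to $q_\rho$ paired with $\hat{p}_\rho$ rather than with $p_\rho$, which is precisely why \cref{fact:double-hat} (recovering $p_\rho/4^n$ from the double transform) is needed and why no pure-state assumption on $\rho$ enters anywhere. This is exactly the point stressed in \cref{remark:p-hat-pure}: previous pure-state analyses could use $\hat{p}_\rho = p_\rho/2^n$ to identify the two sides directly, whereas for mixed states this identity fails and one must route through Plancherel plus $\widehat{\hat{p}_\rho} = p_\rho/4^n$ instead.
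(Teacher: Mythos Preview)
Your proposal is correct and is essentially identical to the paper's proof: apply Plancherel (\cref{fact:plancherel}) to the pair $(q_\rho,\hat{p}_\rho)$, then substitute $\hat{q}_\rho = p_\rho^2$ from \cref{fact:q-hat} and $\widehat{\hat{p}_\rho} = p_\rho/4^n$ from \cref{fact:double-hat}, with the $4^n$ factors cancelling. Your remark about why no pure-state assumption is needed is also exactly the point the paper is making.
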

	\begin{proof}
		\begin{align*}
			\sum_{x \in \F_2^{2n}} q_\rho(x) \hat{p}_\rho(x)
			&= 4^n \sum_{x \in \F_2^{2n}} \hat{q}_\rho(x) \hat{\hat{p}}_\rho(x) && (\mathrm{\cref{fact:plancherel}})\\
			&= \sum_{x \in \F_2^{2n}} p_\rho(x)^3 && (\mathrm{\cref{fact:q-hat}, \cref{fact:double-hat}}) \qedhere
		\end{align*}
	\end{proof}
	
	To actually estimate this quantity, we run an ancilla-free SWAP test between $\rho$ and $W_x \rho W_x$.

	\begin{lemma}[Ancilla Free SWAP Test]\label{lem:swap-test}
		Given two states, $\rho$ and $\sigma$, there exists a measurement that outputs a $\pm 1$ random variable whose mean is $\trace(\rho \sigma)$ and uses one Bell measurement and no ancilla qubits in linear time.
	\end{lemma}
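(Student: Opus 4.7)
The plan is to reduce the measurement to the identity $\tr(\rho\sigma) = \tr\!\bigl((\rho \otimes \sigma) S\bigr)$, where $S$ is the SWAP operator that exchanges two $n$-qubit registers. Since $S$ factorizes as $\bigotimes_{i=1}^n S_i$, where $S_i$ swaps the $i$-th qubit of the first register with the $i$-th qubit of the second, it suffices to jointly measure each $S_i$ in parallel and multiply the outcomes. Because the $S_i$ act on disjoint qubit pairs, they commute and can be measured simultaneously.

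Next, I would observe that the two-qubit SWAP is diagonal in the Bell basis: it has eigenvalue $+1$ on the three symmetric Bell states $\ket{\Phi^+}, \ket{\Phi^-}, \ket{\Psi^+}$ and eigenvalue $-1$ on the antisymmetric singlet $\ket{\Psi^-}$. The standard ancilla-free Bell measurement on a qubit pair (a CNOT, then a Hadamard on the control, then computational-basis readout) is exactly the projective measurement in this eigenbasis. Thus each pair contributes a $\pm 1$ eigenvalue $y_i$ of $S_i$, with $y_i = -1$ iff the outcome is the singlet.

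Define the algorithm's output as $Y \coloneqq \prod_{i=1}^n y_i = (-1)^k$, where $k$ is the number of singlet outcomes. Since the $y_i$ are simultaneous eigenvalues of the commuting operators $S_i$ and their product is the eigenvalue of $S$ on the measured joint eigenstate, standard spectral theorem reasoning gives
\[
\Ex[Y] = \tr\!\bigl((\rho \otimes \sigma) S\bigr) = \tr(\rho\sigma),
\]
with $\abs{Y} = 1$ deterministically. The protocol uses $n$ independent two-qubit Bell measurements (no ancilla qubits), a CNOT and a Hadamard per pair, $n$ bits of classical readout, and a single parity computation, all of which run in $O(n)$ time.

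The main obstacle is really just bookkeeping: verifying the $\pm 1$ eigenstructure of the two-qubit SWAP on the Bell basis (a direct calculation) and making sure that the parallel Bell measurements genuinely implement the joint projective measurement of $S$, which follows because the $S_i$ pairwise commute and act on disjoint tensor factors.
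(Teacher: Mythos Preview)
Your proof is correct and essentially the same as the paper's: both use the identity $\tr(\rho\sigma)=\tr\!\bigl((\rho\otimes\sigma)\,\mathrm{SWAP}\bigr)$, observe that $\mathrm{SWAP}$ is diagonal in the Bell basis, and output the corresponding eigenvalue after a Bell measurement on $\rho\otimes\sigma$. The only cosmetic difference is that you factor the $2n$-qubit SWAP and Bell measurement into $n$ parallel two-qubit pieces, whereas the paper keeps the $2n$-qubit Bell basis as a single block indexed by $a=(v,w)\in\F_2^{2n}$ and records the eigenvalue as $(-1)^{v\cdot w}$; these are the same measurement and the same output, since your parity of singlet outcomes equals $(-1)^{v\cdot w}$.
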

	\begin{proof}
		Define the Bell basis as $\{ W_{0^{2n}} \otimes W_a \ket{\Phi} : a \in \F_2^{2n}\}$ where $\ket \Phi \coloneqq \frac{1}{\sqrt{2^n}} \sum_{x \in \F_2^n} \ket{x} \otimes \ket{x}$. The Bell basis vectors form a complete and orthonormal basis.
		
		Let $\mathrm{SWAP}$ be the $2n$ qubit gate that swaps the first $n$ qubits with the last $n$ qubits.
		It can be shown (see e.g., \cite[Eqn S46]{hangleiter2023bell}) that for all $a = (v, w) \in \F_2^{2n}$ where $v \in \F_2^n$ is the first $n$ bits and $w \in \F_2^n$ the last $n$ that
		\[
		\mathrm{SWAP} \cdot \left(W_{0^{2n}} \otimes W_a\right) \ket{\Phi} =  (-1)^{v \cdot w} W_{0^{2n}} \otimes W_a \ket{\Phi}.
		\]
		That is, $\mathrm{SWAP}$ is diagonal in the Bell basis with $\pm 1$ eigenvalues.
		By measuring $\rho \otimes \sigma$ in the Bell basis, we are returned the Bell state $W_{0^{2n}} \otimes W_a \ket{\Phi}$.
		If we let $a = (v, w) \in \F_2^{2n}$, then outputting $(-1)^{v \cdot w}$ lets us efficiently estimate
		\[
		\trace(\mathrm{SWAP} \cdot \rho \otimes \sigma) = \trace(\rho \sigma),
		\]
		using another well-known identity (see e.g., \cite[Eqn S47]{hangleiter2023bell}).
	\end{proof}
	
	\begin{corollary}[$\hat{p}_\rho$ estimator]\label{cor:p-hat-estimator}
		Given two copies of $\rho$ and $x \in \F_2^{2n}$, there exists a measurement that outputs a $\pm 1$ random variable whose mean is \[4^n \hat{p}_\rho(x)\] and uses one Bell measurement, two $W_x$ gates, and no ancilla qubits in linear time.
	\end{corollary}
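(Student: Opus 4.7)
The plan is to chain Lemma~\ref{lem:p-hat} with the ancilla-free SWAP test of Lemma~\ref{lem:swap-test}. By Lemma~\ref{lem:p-hat}, the target quantity admits the trace representation
\[
    4^n \hat{p}_\rho(x) \;=\; \tr(W_x \rho W_x \rho) \;=\; \tr(\sigma\, \rho), \qquad \sigma \coloneqq W_x \rho W_x.
\]
Since each Weyl operator is both unitary and Hermitian (so $W_x^\dagger = W_x$ and $W_x^2 = I$), the operator $\sigma$ is a legitimate density matrix, produced by applying $W_x$ as a quantum gate to a single copy of $\rho$.

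With this reformulation, the estimator essentially writes itself. First I would take two copies of $\rho$ and apply the Weyl operator $W_x$ to one of them so that the joint state becomes $\sigma \otimes \rho$. Then I would invoke the ancilla-free SWAP test of Lemma~\ref{lem:swap-test} on this pair, which by that lemma outputs a $\pm 1$ random variable with expectation $\tr(\sigma \rho) = \tr(W_x \rho W_x \rho) = 4^n \hat{p}_\rho(x)$, exactly as required. The resource bounds are then read off from each subroutine: the SWAP test consumes one Bell measurement in linear time and uses no ancilla qubits, while applying $W_x$ is a tensor product of single-qubit Paulis (up to a global phase) and thus also runs in linear time with no ancillas.

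There is no real obstacle here — the statement is a packaging corollary that stitches the trace formula for $\hat{p}_\rho$ (Lemma~\ref{lem:p-hat}) to the trace estimator provided by the SWAP test (Lemma~\ref{lem:swap-test}). The only bookkeeping subtlety is the gate-count convention: the statement lists \emph{two} $W_x$ gates, reflecting the two-sided conjugation $W_x(\cdot)W_x$ appearing in the trace formula from Lemma~\ref{lem:p-hat}; this has no effect on correctness, and all other counts match directly.
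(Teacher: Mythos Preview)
Your proposal is correct and matches the paper's proof exactly: apply $W_x$ to one copy to form $\sigma = W_x \rho W_x$, then invoke the ancilla-free SWAP test of Lemma~\ref{lem:swap-test} on $(\rho,\sigma)$ and use Lemma~\ref{lem:p-hat} to identify the mean as $4^n\hat{p}_\rho(x)$. The paper's version is just terser, and your remark about the ``two $W_x$ gates'' bookkeeping is a reasonable reading of the statement's resource count.
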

	\begin{proof}
		Note that all $W_x$ are unitaries.
		Therefore, run the \nameref{lem:swap-test} between $\rho$ and $W_x \rho W_x$ to get a mean of $\trace\left(\rho \left(W_x \rho W_x\right)\right) = 4^n \hat{p}_\rho(x)$.
	\end{proof}
	
	We can now complete our tester by merging Bell difference sampling with \cref{cor:p-hat-estimator}.
	That is, we will sample $x \in \F_2^{2n}$ via Bell difference sampling, then creates an unbiased estimator for $\hat{p}_\rho(x)$ using the \nameref{lem:swap-test}.
	
	\begin{theorem}\label{thm:alg}
		Given $6$ copies of $\rho$, there exists an algorithm that outputs a $\pm 1$ random variable whose mean is \[4^n \sum_{x \in \F_2^{2n}} p_\rho(x)^3.\]
		The algorithm uses only two copies at a time, requires the Clifford operations, no ancilla qubits, and runs in linear time.
	\end{theorem}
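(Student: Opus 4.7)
The plan is to compose Bell difference sampling with the $\hat{p}_\rho$ estimator from \cref{cor:p-hat-estimator} in a two-stage procedure, and then compute the overall expectation by the tower property using \cref{prop:new-test}. Concretely, I would first consume $4$ copies of $\rho$ to run Bell difference sampling as in \cref{def:bell-diff}, producing a sample $x \in \F_2^{2n}$ distributed according to $q_\rho$. Then I would use the remaining $2$ copies of $\rho$, together with the sampled $x$, to apply the ancilla-free SWAP test from \cref{lem:swap-test} between $\rho$ and $W_x \rho W_x$; by \cref{cor:p-hat-estimator} this produces a $\pm 1$ random variable $Y$ satisfying $\Ex[Y \mid x] = 4^n \hat{p}_\rho(x)$. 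The algorithm outputs $Y$.

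To verify correctness, I would observe that the randomness used in Bell difference sampling is independent of the randomness in the subsequent SWAP-based estimator, since the two phases act on disjoint copies of $\rho$. Hence the tower property gives
\[
\Ex[Y] = \Ex_{x \sim q_\rho}\!\left[\Ex[Y \mid x]\right] = \Ex_{x \sim q_\rho}\!\left[4^n \hat{p}_\rho(x)\right] = 4^n \sum_{x \in \F_2^{2n}} q_\rho(x)\, \hat{p}_\rho(x),
\]
and applying \cref{prop:new-test} rewrites the last expression as $4^n \sum_{x \in \F_2^{2n}} p_\rho(x)^3$, as required.

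Finally, I would check the resource claims by tallying the two subroutines. Bell difference sampling is spelled out in \cref{def:bell-diff} as using only two-copy Clifford measurements, no ancilla, and linear time; the ancilla-free SWAP test in \cref{lem:swap-test} uses a single two-copy Bell measurement, no ancilla, and linear time, with the extra step in \cref{cor:p-hat-estimator} being the application of the Pauli $W_x$ (which is Clifford and computable in linear time from $x$). Adding up $4 + 2 = 6$ copies, with at most two ever entangled together at a time, gives all the stated resource bounds.

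The theorem is essentially a composition of already-established ingredients, so I do not expect a serious obstacle; the only step worth articulating carefully is the independence argument that lets the conditionally unbiased estimator $Y$ become an unbiased estimator of the inner product $\sum_x q_\rho(x)\hat{p}_\rho(x)$. Everything else is bookkeeping of copy counts and circuit depth inherited from \cref{def:bell-diff} and \cref{cor:p-hat-estimator}.
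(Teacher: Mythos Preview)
Your proposal is correct and mirrors the paper's own proof almost exactly: Bell difference sample $x\sim q_\rho$ with four copies, then feed $x$ into the $\hat p_\rho$ estimator of \cref{cor:p-hat-estimator} on the remaining two copies, and conclude via \cref{prop:new-test}. If anything, your explicit tower-property/independence justification is slightly more careful than the paper's one-line appeal to ``linearity of expectations.''
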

	\begin{proof}
		Using \cref{def:bell-diff}, we can use $4$ copies to measure $x \in \F_2^{2n}$ from $q_\rho$.
		We then use \cref{cor:p-hat-estimator} to estimate $4^n \hat{p}_\rho$, such that by linearity of expectations the mean of our random variable is
		\[
		\eta \coloneqq 4^n\Ex_{x \sim q_\rho}\left[\hat{p}_\rho(x)\right] = 4^n \sum_{x \in \F_2^{2n}} q_\rho(x) \hat{p}_\rho(x) = 4^n \sum_{x \in \F_2^{2n}} p_\rho(x)^3
		\]
		by \cref{prop:new-test}.
		Since both Bell difference sampling and \cref{cor:p-hat-estimator} run in linear time, so does our final test.
		Finally, we observe that Bell measurements can be performed using Clifford operations only.
	\end{proof}
	
	\begin{remark}\label{remark:alternative-measure}
		A test that measures $\eta^\prime \coloneqq 32^n \sum_{x \in \F_2^{2n}} \hat{p}_\rho(x)^3$ would have also sufficed (though incurring a quadratic loss), as \[
		\eta \geq \eta^\prime \geq \eta^2.
		\]
		However, it is not clear how to sample from a distribution $q^\prime_\rho$ such that $\hat{q^\prime}_\rho(a) \propto \hat{p}_\rho(a)^2$ (i.e., $q^\prime_\rho \propto (p_\rho \ast p_\rho)$), since $p_\rho$ is no longer a distribution.
	\end{remark}
	
	\section{Completeness for Mixed States}
	We now generalize the completeness analysis of \cite{grewal_et_al:LIPIcs.ITCS.2023.64, grewal2023improved} to mixed state inputs.
	In \cref{lem:p-lower} we show that the sum of the $p_\rho$-mass on any Lagrangian subspace $L$ is lower bounded by the fidelity with any stabilizer state $\ket \phi$ such that $\weyl(\ket \phi) = L$.
	Using H\"older's inequality and the non-negativitiy of $p_\rho$, in \cref{cor:eta-lower} we can extend this to lower bound the quantity measured in \cref{thm:alg} by $\calF(\rho)^6$.
	
	\begin{lemma}\label{lem:p-lower}
		Given an $n$-qubit state $\rho$ and stabilizer state $\ket \phi$ such that $L = \weyl(\ketbra{\phi}{\phi})$,
		\[\sum_{x \in L} p_\rho(x) \geq \braket{\phi | \rho | \phi}^2.\]
	\end{lemma}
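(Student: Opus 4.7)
The plan is to exploit the Weyl decomposition of the stabilizer state $\ketbra{\phi}{\phi}$ on $L$, rewrite the fidelity $\braket{\phi|\rho|\phi}$ as a signed linear combination of $\trace(W_x \rho)$ indexed by $x \in L$, and then apply Cauchy--Schwarz to bound its square by the $\ell^2$-sum that equals $2^n \sum_{x \in L} p_\rho(x)$. This converts the desired inequality into a standard norm inequality over a finite index set, avoiding any reliance on $\rho$ being pure.

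Concretely, I would first record that since $L = \weyl(\ketbra{\phi}{\phi})$ is Lagrangian with $\abs{L} = 2^n$, and $\trace(W_x \ketbra{\phi}{\phi}) = 0$ for $x \notin L$, the Weyl expansion collapses to
\[
    \ketbra{\phi}{\phi} = \frac{1}{2^n}\sum_{x \in L} s_x \, W_x, \qquad s_x \coloneqq \trace(W_x \ketbra{\phi}{\phi}) \in \{-1, +1\}.
\]
Taking the trace against $\rho$ gives
\[
    \braket{\phi | \rho | \phi} = \frac{1}{2^n}\sum_{x \in L} s_x \, \trace(W_x \rho).
\]
Squaring this identity and applying Cauchy--Schwarz to the sum over the $\abs{L} = 2^n$ terms yields
\[
    \braket{\phi | \rho | \phi}^2 \;\leq\; \frac{1}{4^n}\cdot 2^n \sum_{x \in L} \trace^2(W_x \rho) \;=\; \frac{1}{2^n}\sum_{x \in L} \trace^2(W_x \rho) \;=\; \sum_{x \in L} p_\rho(x),
\]
using the definition $p_\rho(x) = \tfrac{1}{2^n}\trace^2(W_x \rho)$ in the last step.

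There is not really a hard step here; the only thing to be careful about is invoking the characterization of $\weyl(\ketbra{\phi}{\phi})$ for stabilizer states (Lagrangian, with all $\pm 1$ Weyl coefficients on $L$ and $0$ outside) so that the Weyl expansion of $\ketbra{\phi}{\phi}$ is supported exactly on $L$. Given that, the key conceptual observation is simply that the fidelity is a linear functional of $\trace(W_x \rho)$ supported on $L$ with $\pm 1$ coefficients, so Cauchy--Schwarz yields the square of the fidelity as a lower bound for the $\ell^2$-mass of $\{\trace(W_x \rho)\}_{x \in L}$, which is precisely $2^n \sum_{x \in L} p_\rho(x)$.
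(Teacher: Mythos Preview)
Your proof is correct and follows essentially the same approach as the paper: both express $\braket{\phi|\rho|\phi}$ as $\tfrac{1}{2^n}\sum_{x\in L}(\pm 1)\trace(W_x\rho)$ and then apply Cauchy--Schwarz over the $2^n$ terms. The only cosmetic difference is that the paper first conjugates by a Clifford to reduce to $\ket{0^n}$ (so all signs become $+1$), whereas you keep the signs $s_x$ explicit---but since $s_x^2=1$, the Cauchy--Schwarz step is identical.
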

	\begin{proof}
		For simplicity, we will take the Clifford unitary $C$ that maps $\ket \phi$ to the all zeros state.
		Then $\sum_{x \in L} p_\rho(x)$ is now equal to the sum of $p_{C \rho C^\dagger}$ over $L^\prime = 0^n \otimes \F_2^n = \weyl(\ketbra{0^n}{0^n})$.
		Using Cauchy-Schwarz we find that
		\begin{align*}
			\sum_{x \in 0^n \times \F_2^n} p_{C \rho C^\dagger}(x)
			&= \frac{1}{2^n}\sum_{x \in 0^n \times \F_2^n} \trace^2\left(C \rho C^\dagger W_x\right)\\
			&\geq \frac{1}{4^n} \left(\sum_{x \in 0^n \times \F_2^n} \trace\left(C \rho C^\dagger W_x\right)\right)^2\\
			&= \frac{1}{4^n} \left(2^n \trace\left(C \rho C^\dagger \ketbra{0^n}{0^n}\right)\right)^2\\
			&= \braket{\phi | \rho | \phi}^2. \qedhere
		\end{align*}
	\end{proof}

	\begin{corollary}\label{cor:eta-lower}
		Given an $n$-qubit state $\rho$,
		\[4^n \sum_{x \in \F_2^{2n}} p_\rho(x)^3 \geq \calF(\rho)^6.\]
	\end{corollary}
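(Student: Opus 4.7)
The plan is to combine \cref{lem:p-lower} with H\"older's inequality (as hinted by the paragraph preceding the statement) applied to the Lagrangian subspace $L = \weyl(\ketbra{\phi}{\phi})$, where $\ket{\phi}$ is the stabilizer state achieving the maximum in the definition of $\calF(\rho)$.

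First, I would fix $\ket{\phi}$ to be an optimal stabilizer state so that $\braket{\phi | \rho | \phi} = \calF(\rho)$, and let $L = \weyl(\ketbra{\phi}{\phi})$, which is Lagrangian and hence has $|L| = 2^n$. By \cref{lem:p-lower}, $\sum_{x \in L} p_\rho(x) \geq \calF(\rho)^2$.

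Next, I would apply H\"older's inequality with exponents $3$ and $3/2$ to the sequences $\{p_\rho(x)\}_{x \in L}$ and the constant sequence $1$, using that $p_\rho \geq 0$:
\[
\sum_{x \in L} p_\rho(x) \;\leq\; \left(\sum_{x \in L} p_\rho(x)^3\right)^{1/3} \cdot |L|^{2/3} \;=\; 2^{2n/3}\left(\sum_{x \in L} p_\rho(x)^3\right)^{1/3}.
\]
Cubing and rearranging gives $\sum_{x \in L} p_\rho(x)^3 \geq \calF(\rho)^6 / 4^n$. Finally, since $p_\rho(x)^3 \geq 0$ everywhere, extending the sum from $L$ to all of $\F_2^{2n}$ only increases it, so multiplying by $4^n$ yields the claim.

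There is no serious obstacle here: the non-negativity of $p_\rho$ (from \cref{fact:p-hat-non-negative}-style reasoning applied directly to $p_\rho$, since $\trace^2(W_x \rho)$ is manifestly non-negative) is what allows H\"older to be applied cleanly, and \cref{lem:p-lower} does the heavy lifting of connecting stabilizer fidelity to the $p_\rho$-mass on a Lagrangian. The only subtle point is making sure the Lagrangian $L$ in \cref{lem:p-lower} is the one associated with a fidelity-optimal stabilizer state, which is immediate by choice.
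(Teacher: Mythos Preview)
Your proposal is correct and follows essentially the same route as the paper: choose a fidelity-optimal stabilizer state, apply \cref{lem:p-lower} to lower bound $\sum_{x\in L}p_\rho(x)$ by $\calF(\rho)^2$, use H\"older with exponents $(3,3/2)$ over the Lagrangian $L$ of size $2^n$, and then extend the sum using $p_\rho\ge 0$. The paper merely phrases the same chain as inequalities on $\calF(\rho)$ rather than cubing and rearranging.
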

	\begin{proof}
		Let $\ket \phi$ be the stabilizer state that achieves maximum fidelity $\rho$ and let $L = \weyl(\ket \phi)$.
		\begin{align*}
			\calF(\rho) &= \braket{\phi | \rho | \phi}\\
			&\leq \sqrt{\sum_{x \in L} p_\rho(x)} && (\mathrm{\cref{lem:p-lower}})\\
			& \leq \left(4^n \sum_{x \in L} p_\rho(x)^3\right)^{1/6} && (\text{H\"olders inequality})\\
			&\leq \left(4^n \sum_{x \in \F_2^{2n}} p_\rho(x)^3\right)^{1/6}. && (p_\rho(x) \geq 0) \qedhere
		\end{align*}
	\end{proof}

	\section{Soundness for Mixed States}
	To show soundness of the test, previous work for pure states used the fact that $\calF(\ketbra{\psi}{\psi}) \geq \sum_{x \in L} p_{\Psi}(x)$ for any Lagrangian subspace $L \subset \F_2^{2n}$.
	Thus, the soundness proof would proceed by showing that if the test accepted with high probability, then there exists a Lagrangian subspace $L$ such that $\sum_{x \in L} p_{\Psi}(x)$ was large.
	While this bound remains true for mixed states (see \cref{lem:fidelity-lower}), we will use $\hat{p}_\rho$ as the lower bounding tool due to its relationship to \cref{thm:alg}.
	
	We first show why using $\hat{p}_\rho$ is sufficient, starting with the lower bound using $p_\rho$, then using \cref{fact:duality} to convert it into a bound for $\hat{p}_\rho$.
	
	\begin{lemma}[Corollary of {\cite[Lemma B.9]{grewal2023efficient}}]\label{lem:fidelity-lower}
		For an arbitrary mixed state $\rho$ and Lagrangian subspace $L$, there exists a stabilizer state $\ket \phi$ with $\weyl(\ket \phi) = L$ such that \[\braket{\phi | \rho | \phi} \geq \sum_{x \in L} p_\rho(x).\]
	\end{lemma}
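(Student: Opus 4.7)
The plan is to exploit the structure of a Lagrangian subspace $L$: the $2^n$ stabilizer states $\ket{\phi_s}$ with $\weyl(\ket{\phi_s}) = L$ form an orthonormal basis of $\mathbb{C}^{2^n}$, indexed by cosets $s \in \F_2^{2n}/L$. For each $x \in L$, the Weyl operator $W_x$ is diagonal in this basis with $\pm 1$ eigenvalues, and the sign is precisely the character $(-1)^{[s, x]}$ (after suitable choice of coset representatives). So I would first define $q_s := \braket{\phi_s | \rho | \phi_s}$, note that $\{q_s\}$ is a probability distribution (since the $\ket{\phi_s}$ form an orthonormal basis), and expand $\trace(W_x \rho) = \sum_s (-1)^{[s, x]} q_s$ for $x \in L$.

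Next, I would plug this into the definition of $p_\rho$ and sum over $L$:
\[
\sum_{x \in L} p_\rho(x) = \frac{1}{2^n}\sum_{x \in L} \Paren{\sum_s (-1)^{[s, x]} q_s}^2 = \frac{1}{2^n} \sum_{s,t} q_s q_t \sum_{x \in L}(-1)^{[s+t, x]}.
\]
By \cref{lemma:sum-over-characters} (using that $L^\sympcomp = L$), the inner sum equals $2^n \indic{s = t \bmod L}$, collapsing the double sum to $\sum_s q_s^2$. Thus $\sum_{x \in L} p_\rho(x) = \sum_s q_s^2$.

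Finally, since $\{q_s\}$ is a probability distribution, we have the elementary bound $\sum_s q_s^2 \leq (\max_s q_s)(\sum_s q_s) = \max_s q_s$. Choosing $\ket{\phi} = \ket{\phi_{s^*}}$ where $s^*$ achieves the maximum gives $\braket{\phi | \rho | \phi} \geq \sum_{x \in L} p_\rho(x)$, as desired.

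The only real subtlety is the bookkeeping around the indexing of the stabilizer basis: one must verify carefully that the $2^n$ stabilizer states with Weyl group $L$ are naturally parametrized by $\F_2^{2n}/L$ in such a way that $\braket{\phi_s | W_x | \phi_s} = (-1)^{[s, x]}$ for $x \in L$. This follows from the standard fact that the Lagrangian $L$ corresponds to an abelian stabilizer group, whose simultaneous eigenbases are indexed by its dual group $\hat{L} \cong \F_2^{2n}/L^\sympcomp = \F_2^{2n}/L$. Once this identification is in place, the computation is purely a character-orthogonality argument and the final inequality is immediate.
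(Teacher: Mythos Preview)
Your argument is correct. The paper does not supply its own proof of this lemma; it simply cites \cite[Lemma B.9]{grewal2023efficient}, so there is nothing to compare line by line. Your route---diagonalizing $\rho$ against the orthonormal basis of stabilizer states sharing $\weyl(\cdot)=L$, collapsing the double sum via character orthogonality on $L$, and bounding $\sum_s q_s^2 \le \max_s q_s$---is exactly the standard argument behind that cited result.

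One cosmetic point worth tightening: the eigenvalue of $W_x$ on $\ket{\phi_s}$ is really $(-1)^{[s,x]}\epsilon_x$, where $\epsilon_x \in \{\pm 1\}$ is the sign of $W_x$ in the stabilizer group of the chosen base state $\ket{\phi_0}$; only after fixing $\ket{\phi_0}$ do the remaining states get parametrized by cosets via $\ket{\phi_s} = W_s\ket{\phi_0}$. Since $p_\rho(x)$ involves $\trace^2(W_x\rho)$, this global sign $\epsilon_x$ squares away and your computation of $\sum_{x\in L} p_\rho(x)=\sum_s q_s^2$ is unaffected, but your sentence ``$\braket{\phi_s | W_x | \phi_s} = (-1)^{[s, x]}$'' is literally true only up to this sign.
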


	\begin{corollary}\label{cor:fidelity-lower}
		For arbitrary mixed state $\rho$ and Lagrangian subspace $L$:
		\[\calF(\rho) \geq 2^n \sum_{x \in L} \hat{p}_\rho(x)\]
	\end{corollary}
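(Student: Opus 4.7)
The plan is to combine the existing lower bound on the stabilizer fidelity from \cref{lem:fidelity-lower}, which is phrased in terms of $p_\rho$ over the Lagrangian subspace $L$, with the symplectic Fourier duality of \cref{fact:duality} in order to re-express the same quantity in terms of $\hat{p}_\rho$. Concretely, \cref{lem:fidelity-lower} produces a stabilizer state $\ket \phi$ with $\weyl(\ket \phi) = L$ such that $\braket{\phi | \rho | \phi} \geq \sum_{x \in L} p_\rho(x)$, and since $\calF(\rho)$ is defined as the supremum over all stabilizer states we immediately get $\calF(\rho) \geq \sum_{x \in L} p_\rho(x)$. So all that remains is to relate the right-hand side to a sum of Fourier coefficients.

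The key observation is that when $T = L$ is Lagrangian we have $L^\sympcomp = L$ and $\abs{L} = 2^n$. Applying \cref{fact:duality} with $f = p_\rho$ and $T = L$ then yields the \emph{equality}
\[
\sum_{x \in L} p_\rho(x) \;=\; \abs{L} \sum_{x \in L^\sympcomp} \hat{p}_\rho(x) \;=\; 2^n \sum_{x \in L} \hat{p}_\rho(x).
\]
Chaining this with the bound from the previous paragraph gives $\calF(\rho) \geq 2^n \sum_{x \in L} \hat{p}_\rho(x)$, as desired.

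There is essentially no obstacle here: the step that could in principle have been lossy — passing from $p_\rho$ on $L$ to $\hat{p}_\rho$ on $L^\sympcomp$ — is exact because $L$ is self-dual under the symplectic form, and no inequality beyond the one already supplied by \cref{lem:fidelity-lower} is required. The result should thus be stated and proved as a short two-line corollary of \cref{lem:fidelity-lower} and \cref{fact:duality}, with the Lagrangian hypothesis doing all the work by collapsing $L^\sympcomp$ back to $L$.
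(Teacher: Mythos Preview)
Your proposal is correct and follows exactly the paper's own argument: invoke \cref{lem:fidelity-lower} to get $\calF(\rho) \geq \sum_{x \in L} p_\rho(x)$, then apply \cref{fact:duality} with the Lagrangian condition $L^\sympcomp = L$ (and $\abs{L} = 2^n$) to rewrite the right-hand side as $2^n \sum_{x \in L} \hat{p}_\rho(x)$. There is nothing to add or change.
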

	\begin{proof}
		Using \cref{lem:fidelity-lower}, by definition we see that \[\calF(\rho) \coloneqq \max_{\ket \phi^\prime} \braket{\phi^\prime | \rho | \phi^\prime} \geq \braket{\phi | \rho | \phi} \geq \sum_{x \in L} p_\rho(x).\]
		Since $L$ is Lagrangian, $L^\sympcomp = L$.
		Using \cref{fact:duality}, we find that
		\[
		\sum_{x \in L} p_\rho(x) = 2^n \sum_{x \in L} \hat{p}_\rho(x),
		\]
		thus completing the proof.
	\end{proof}
	
	We now show that $4^n \sum_{x \in \F_2^{2n}} p_\rho(x)^3$ measures `linearity' of $\hat{p}_\rho$ in a sense.
	
	\begin{lemma}\label{lem:linearity}
		\[
		4^n \sum_{x \in \F_2^{2n}} p_\rho(x)^3 = 16^{n}\sum_{x, y \in \F_2^{2n}}\hat{p}_\rho(x) \hat{p}_\rho(y) \hat{p}_\rho(x+y) .
		\]
	\end{lemma}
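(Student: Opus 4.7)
The plan is to recognize the right-hand side as an inner product of $\hat{p}_\rho$ with its own self-convolution, then transport this to the $p_\rho$ side using Plancherel (\cref{fact:plancherel}), the convolution theorem (\cref{thm:convolution-theorem}), and the double-hat identity (\cref{fact:double-hat}).

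First I would reindex the double sum by setting $z = x+y$ so that
\[
\sum_{x,y \in \F_2^{2n}} \hat{p}_\rho(x) \hat{p}_\rho(y) \hat{p}_\rho(x+y) = \sum_{z \in \F_2^{2n}} \hat{p}_\rho(z) \sum_{y \in \F_2^{2n}} \hat{p}_\rho(y) \hat{p}_\rho(y+z) = 4^n \sum_{z \in \F_2^{2n}} \hat{p}_\rho(z)\,(\hat{p}_\rho \ast \hat{p}_\rho)(z),
\]
using the definition of convolution (the factor $4^n$ cancels the $1/4^n$ built into the definition). So the RHS of the target identity equals $16^n \cdot 4^n \sum_z \hat{p}_\rho(z)(\hat{p}_\rho \ast \hat{p}_\rho)(z)$.

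Next I would apply Plancherel to this inner product, which turns $\sum_z \hat{p}_\rho(z)(\hat{p}_\rho \ast \hat{p}_\rho)(z)$ into $4^n \sum_z \widehat{\hat{p}_\rho}(z) \cdot \widehat{\hat{p}_\rho \ast \hat{p}_\rho}(z)$. By the convolution theorem, $\widehat{\hat{p}_\rho \ast \hat{p}_\rho}(z) = \widehat{\hat{p}_\rho}(z)^2$, and by the double-hat identity $\widehat{\hat{p}_\rho}(z) = p_\rho(z)/4^n$. Substituting, the inner product becomes $4^n \sum_z \frac{p_\rho(z)}{4^n} \cdot \frac{p_\rho(z)^2}{16^n} = \frac{1}{16^n} \sum_z p_\rho(z)^3$.

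Combining these, the RHS is $16^n \cdot 4^n \cdot \frac{1}{16^n} \sum_z p_\rho(z)^3 = 4^n \sum_z p_\rho(z)^3$, matching the LHS. There is no real obstacle here: the identity is purely a symplectic-Fourier bookkeeping statement with no quantum content beyond the tools already developed in the preliminaries, and the only thing to watch is to keep the $4^n$ factors straight (which arise because the symplectic Fourier transform uses the $1/4^n$ normalization on only one side).
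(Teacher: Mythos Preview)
Your proof is correct and follows essentially the same approach as the paper: rewrite the double sum as $\hat{p}_\rho$ against its self-convolution, apply Plancherel, then invoke the convolution theorem and the double-hat identity. The steps and the tracking of the $4^n$ normalization factors match the paper's argument line for line.
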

	\begin{proof}
		Via linearity of expectation,
		\begin{align*}
			\sum_{x, y \in \F_2^{2n}}\hat{p}_\rho(x) \hat{p}_\rho(y) \hat{p}_\rho(x+y)
			&= 4^n \sum_{x \in \F_2^{2n}} \left[\hat{p}_\rho(x) (\hat{p}_\rho \ast \hat{p}_\rho)(x)\right]\\
			&= 16^n \sum_{x \in \F_2^{2n}} \hat{\hat{p}}_\rho(x) \wh{\hat{p}_\rho \ast \hat{p}_\rho}(x) && (\mathrm{\cref{fact:plancherel}})\\
			&= \frac{1}{4^{n}}\sum_{x \in \F_2^{2n}} p_\rho(x) p_\rho(x)^2. && (\mathrm{\cref{thm:convolution-theorem}, \cref{fact:double-hat}}) \qedhere\\
		\end{align*}
	\end{proof}
	
	The basic steps of soundness will be to: 1) find an approximate subspace $S$ where every entry has large $\hat{p}_\rho$ values, 2) find an affine subspace $z+V$ that has large intersection with $S$ and therefore has large $\hat{p}_\rho$ mass, 3) argue that the proper subspace $V$ has just as much $\hat{p}_\rho$ mass as $z+V$, and 4) argue that some \emph{isotropic} subspace has large intersection with $V$ and thus has large $\hat{p}_\rho$ mass.
	To summarize, the path we take is
	\[
	\text{approximate subspace } \rightarrow \text{ affine subspace } \rightarrow \text{ proper subspace } \rightarrow \text{ isotropic subspace}
	\]
	where at each step we argue that the $\hat{p}_\rho$ mass only goes down by at most a polynomial factor.
	
	We now generalize the proof ideas from \cite{arunachalam2024toleranttestingstabilizerstates,arunachalam2024notepolynomialtimetoleranttesting,bao2024toleranttestingstabilizerstates} to show that there exists an approximate subgroup of $\F_2^{2n}$ with large $\hat{p}_\rho$ mass.
	This is because the major properties of $p_\Psi$ used are 1) $\sum_{x \in \F_2^{2n}} p_\Psi(x) \leq 1$, 2) $0 \leq p_\Psi(x) \leq \frac{1}{2^n}$, and 3) $\hat{p}_\Psi \geq 0$, which are satisfied by $2^n \hat{p}_\rho$ as well.
	Because the following proofs essentially mirror that of prior work, we omit them where applicable and only give formal proofs where the proof breaks down for mixed states.

	\begin{restatable}[Generalization of {\cite[Theorem 4.7]{arunachalam2024toleranttestingstabilizerstates}}]{lemma}{approxsubspacerestate}\label{thm:approx-subspace}
		Let $f : \F_2^{2n} \rightarrow \mathbb{R}^{\geq 0}$ be an arbitrary function such that $0 \leq f(x) \leq \frac{1}{2^n}$ and $\sum_{x \in \F_2^{2n}} f(x) \leq 1$.
		If \[2^n \sum_{x, y \in \F_2^{2n}} f(x) f(y) f(x+y) \geq \gamma\]
		then there exists a \emph{subset} $S \subseteq \F_2^{2n}$ of size $\abs{S} \in [\gamma^2/80, 4/\gamma] \cdot 2^n$ satisfying
		\begin{itemize}
			\item $\forall x \in S$, $2^n f(x) \geq \gamma/4$
			\item $\Pr_{x, y \in S}\left[x + y \in S\right] \geq \Omega(\gamma^5)$.
		\end{itemize}
	\end{restatable}
	\begin{proof}[Proof Sketch]
		The only step that uses the fact that $p_\Psi$ is a distribution is \cite[Fact 2.1]{arunachalam2024toleranttestingstabilizerstates}, which is used to show that \[M \coloneqq \{x \in \F_2^{2n} : 2^n p_\Psi(x) \geq \frac{\gamma}{4}\}\]
		has size at least $\abs{M} \geq \frac{\gamma}{2} 2^n$.
		We now formally prove that statement for $f$, with an improvement in the constant.
		
		\vspace{-2em}
		\begin{changemargin}{1cm}{1cm} 
			\begin{claim}\label{claim:many-large-p-hat}
				Let $M \coloneqq \{x \in \F_2^{2n} : 2^n f(x) \geq \frac{\gamma}{4}\}$, then $\abs{M} \geq \frac{3}{4}\gamma 2^n$.
			\end{claim}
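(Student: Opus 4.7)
The plan is to lower-bound $|M|$ by splitting the assumed energy-like sum
$2^n \sum_{x,y} f(x)\,f(y)\,f(x+y) \geq \gamma$
according to whether the index $x$ is in $M$, and then to bound each of the two pieces by invoking a different hypothesis on $f$: the pointwise bound $f \leq 1/2^n$ controls the ``heavy'' half, while the defining inequality of $M$ controls the ``light'' half. The argument essentially mirrors the proof of \cite[Fact 2.1]{arunachalam2024toleranttestingstabilizerstates}, but must be adapted so that it never invokes $\sum f = 1$, only the weaker sub-distribution bound $\sum f \leq 1$ that is available in the mixed-state setting.

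Before splitting, I would establish a single uniform estimate on the inner sum: for every $x \in \F_2^{2n}$, using $f(y) \leq 1/2^n$ and $\sum_y f(x+y) = \sum_y f(y) \leq 1$,
\[
\sum_{y \in \F_2^{2n}} f(y)\,f(x+y) \ \leq \ \frac{1}{2^n}\sum_{y} f(x+y) \ \leq \ \frac{1}{2^n}.
\]

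Now I split the triple sum. For the $x \in M$ piece, apply the inner-sum bound and then $f(x) \leq 1/2^n$ to obtain
\[
2^n \sum_{x \in M} f(x) \sum_y f(y)\,f(x+y) \ \leq \ \sum_{x \in M} f(x) \ \leq \ \frac{|M|}{2^n}.
\]
For the $x \notin M$ piece, the defining property gives $f(x) \leq \gamma/(4 \cdot 2^n)$, so I pull that factor out and then sum freely:
\[
2^n \sum_{x \notin M} f(x) \sum_y f(y)\,f(x+y) \ \leq \ \frac{\gamma}{4} \sum_y f(y) \sum_x f(x+y) \ \leq \ \frac{\gamma}{4}.
\]
Adding the two pieces gives $\gamma \leq |M|/2^n + \gamma/4$, which rearranges to $|M| \geq (3\gamma/4)\cdot 2^n$, the claimed bound.

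The only place where one might worry about a genuine barrier is at the uniform inner-sum estimate, where the pure-state proof is free to use that $2^n p_\Psi$ is an actual probability distribution; here $2^n \hat{p}_\rho$ is only a sub-distribution. Fortunately, the argument above only uses $\sum f \leq 1$ and never uses equality, so nothing is lost. As a small bonus, this route yields the constant $3/4$ in front of $\gamma$, which is slightly better than the $1/2$ obtained in the corresponding pure-state statement.
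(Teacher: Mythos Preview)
Your proof is correct. The paper takes a slightly different route: it first applies Cauchy--Schwarz to compress the hypothesis into the second-moment bound $\gamma \leq 2^n \sum_x f(x)^2$, and only then splits that quadratic sum according to whether $x \in M$, obtaining the intermediate inequality $\sum_{x \in M} f(x) \geq 3\gamma/(4-\gamma)$ before passing to $|M|$ via $f(x) \leq 1/2^n$. You instead split the triple sum $2^n\sum_{x,y} f(x)f(y)f(x+y)$ directly, replacing the Cauchy--Schwarz step with the uniform inner-sum estimate $\sum_y f(y)f(x+y) \leq 1/2^n$. Your route is a touch more elementary and reaches the stated constant $3/4$ in one fewer step; the paper's detour through the $\ell_2$ energy yields the marginally sharper constant $3/(4-\gamma)$ and isolates $2^n \sum_x f(x)^2 \geq \gamma$ as a standalone fact, though neither refinement is used downstream.
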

			\begin{proof}
				We first show the following lower bound using Cauchy-Schwarz
				\[
				\gamma \leq 2^n \sum_{x, y \in \F_2^{2n}} f(x)f(y)f(x+y) \leq 2^n \sum_{x, y \in \F_2^{2n}}f(x)^2 f(y) \leq 2^n \sum_{x \in \F_2^{2n}} f(x)^2.
				\]
				
				Observe that $\sum_{x \in M} f(x) + \sum_{x \not\in M} f(x) \leq 1$ such that $2^n \max_{x \in M} f(x) \leq 1$ and $2^n \max_{x \not\in M} f(x) < \frac{\gamma}{4}$.
				Therefore
				\[
				2^n \sum_{x \in \F_2^{2n}} f(x)^2 = 2^n \left(\sum_{x \in M} + \sum_{x \not\in M}\right) f(x)^2 \leq \sum_{x \in M} f(x) + \frac{\gamma}{4} \sum_{x \not \in M} f(x) \leq (1 - \frac{\gamma}{4}) \sum_{x \in M} f(x) + \frac{\gamma}{4}.
				\]
				We thus have $\gamma \leq (1 - \frac{\gamma}{4}) \sum_{x \in M} f(x) + \frac{\gamma}{4}$, which implies $\sum_{x \in M} f(x) \geq \frac{3 \gamma}{4-\gamma} \geq \frac{3}{4}\gamma$.
				Noting that $f(x) \leq \frac{1}{2^n}$ completes the proof.
			\end{proof}
		\end{changemargin}

		The rest of the proof follows without change.
		Since $\sum_{x \in \F_2^{2n}} f(x) \leq 1$, $\abs{M} \leq \frac{4}{\gamma}2^n$.
		We probabilistically create the set $S$ such that $x \in M$ is added to $S$ with probability $2^n f(x)$.
		Since $S \subseteq M$, $2^n f(x) \geq \frac{\gamma}{4}$ for all $x \in S$.
		Taking the expectation over $S$, we find that  $\Ex_{S}\left[\Pr_{x, y \in S}\left[x + y \in S\right]\right] \geq \Omega(\gamma^5)$, by relating it to $2^n \sum_{x, y \in \F_2^{2n}} f(x) f(y) f(x+y)$.\footnote{This is the most involved step of the proof. See \cite[Proof of Theorem 4.5]{arunachalam2024toleranttestingstabilizerstates} for details.}
		Furthermore, $\Ex_{S}\left[\abs{S}\right] \geq \frac{\gamma}{4} \abs{M} \geq \frac{3}{16}\gamma^2 2^n$.
		Using a Chernoff bound, we find that there is a non-zero probability of $S$ satisfying both $\abs{S} \geq \frac{\gamma^2}{80} 2^n$ and $\Pr_{x, y \in S}\left[x + y \in S\right] \geq \Omega(\gamma^5)$ simultaneously.
		This implies that such an $S$ must exist, by the probabilistic method.
		Noting that for all possible $S$, $\abs{S} \leq \abs{M} \leq \frac{4}{\gamma}2^n$ allows us to satisfy the final condition.
	\end{proof}
	
	\begin{remark}
		It is possible to use \cite[Theorem 4.7]{arunachalam2024toleranttestingstabilizerstates} directly, but this would have incurred an extra factor $2$ in the exponent of $\poly(\gamma)$ of the final result.
		See \cref{remark:alternative-measure} for reasoning.
		It also (minorly) complicates the later proofs, in that we would have to use \cref{cor:symplectic-subspace-upper} instead of \cref{lem:symplectic-subspace-upper}.
	\end{remark}
	
	We now want to show that we can turn $S$  from \cref{thm:approx-subspace} into an actual subspace.
	To start, we turn $S$ into an \emph{affine} subspace/coset $z+V$.
	The following was (implicitly) derived in \cite{arunachalam2024toleranttestingstabilizerstates} using results from additive combinatorics.
	
	\begin{lemma}[{\cite[Proof of Corollary 4.11]{arunachalam2024toleranttestingstabilizerstates}}]\label{lem:additive-comb}
		Let $S \subseteq \F_2^{n}$ with $\Pr_{x, y \in S}\left[x + y \in S \right] \geq \eps$. Then there exists a subset $S^\prime \subseteq S$ and subspace $V \subseteq \F_2^n$ and $z \in \F_2^n$ (i.e., an affine subspace $z+V$) with $\abs{S^\prime \cap (z + V)} \geq  O(\eps^{72}) \abs{S^\prime}$, $\abs{V} \leq \abs{S^\prime}$, and $\abs{S^\prime} \geq \frac{\eps}{3}\abs{S}$.
	\end{lemma}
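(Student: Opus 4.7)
The plan is to run the standard Balog--Szemer\'edi--Gowers (BSG) plus Freiman--Ruzsa pipeline from additive combinatorics, specialized to the vector space $\F_2^n$. The hypothesis $\Pr_{x,y \in S}[x+y \in S] \geq \eps$ states that there are at least $\eps \abs{S}^2$ additive triples $(x,y,x+y) \in S^3$; writing $r(t) := \abs{\{(x,y) \in S^2 : x+y = t\}}$, this means $\sum_{t \in S} r(t) \geq \eps \abs{S}^2$. Cauchy--Schwarz then gives the additive energy lower bound $E(S) := \sum_{t \in \F_2^n} r(t)^2 \geq \eps^2 \abs{S}^3$, which is precisely the hypothesis needed to feed into BSG.

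Next I would apply BSG over $\F_2^n$: any set whose additive energy is a $\delta$-fraction of the trivial upper bound $\abs{S}^3$ contains a subset of size $\Omega(\delta^{c_1}) \abs{S}$ with doubling constant at most $\delta^{-c_2}$. To obtain the cleaner linear-in-$\eps$ dependence $\abs{S'} \geq \eps \abs{S}/3$ claimed in the statement, I would first do a Markov-style popularity pruning --- the subset $\{x \in S : \abs{\{y \in S : x+y \in S\}} \geq \eps \abs{S}/2\}$ already has size at least $\eps \abs{S}/2$ --- and then invoke BSG inside this popular sub-population, losing only a small constant factor on the size while guaranteeing a doubling constant $K = \mathrm{poly}(1/\eps)$.

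Having controlled the doubling, I would apply a Freiman--Ruzsa-type covering result in $\F_2^n$: if $\abs{S' + S'} \leq K \abs{S'}$, then $S'$ is covered by $\mathrm{poly}(K)$ cosets of some subspace $V$ of size at most $\abs{S'}$. The size bound $\abs{V} \leq \abs{S'}$ is characteristic-$2$-specific and follows because in $\F_2^n$ one can choose $V$ as a subspace sitting inside an appropriate iterated sumset of $S'$, whose dimension is controlled by Pl\"unnecke--Ruzsa. A pigeonhole argument across the $\mathrm{poly}(K)$ cosets then produces an affine subspace $z + V$ with $\abs{S' \cap (z+V)} \geq \abs{S'}/\mathrm{poly}(K) = \mathrm{poly}(\eps) \abs{S'}$, which is the desired conclusion.

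The main obstacle is arithmetic bookkeeping to recover the specific exponent $72$. Each of the three stages --- popularity pruning, BSG, and Freiman--Ruzsa covering --- introduces a polynomial loss in $\eps$, and the exponents compound multiplicatively. The exponent $72$ is not inherent to the problem; it merely reflects the quantitative versions of BSG and $\F_2^n$-Freiman--Ruzsa one plugs in, and sharper recent results on polynomial Freiman--Ruzsa in $\F_2^n$ would improve it considerably.
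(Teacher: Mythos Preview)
The paper does not itself prove this lemma; it simply cites it as implicit in \cite[Proof of Corollary~4.11]{arunachalam2024toleranttestingstabilizerstates}, whose argument is precisely the Balog--Szemer\'edi--Gowers plus Freiman--Ruzsa pipeline you outline (popularity pruning to isolate a dense $S'$, BSG to pass to a small-doubling subset, then the $\F_2^n$ covering form of Freiman--Ruzsa followed by pigeonhole over cosets). Your sketch is correct and matches that route; the one imprecision is the remark that BSG ``los[es] only a small constant factor on the size''---it loses a polynomial factor---but since you ultimately measure $|S'\cap(z+V)|$ relative to the popular set $S'$ rather than to the BSG output, this loss is harmlessly absorbed into the $\eps^{72}$ exponent.
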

	
	\begin{restatable}[Generalization of {\cite[Corollary 4.11]{arunachalam2024toleranttestingstabilizerstates}}]{corollary}{actualsubspacerestate}\label{thm:actual-subspace}
		Let $f : \F_2^{2n} \rightarrow \mathbb{R}^{\geq 0}$ be an arbitrary function such that $0 \leq f(x) \leq \frac{1}{2^n}$ and $\sum_{x \in \F_2^{2n}} f(x) \leq 1$.
		If \[2^n \sum_{x, y \in \F_2^{2n}} f(x) f(y) f(x+y) \geq \gamma\]
		then there there exists a subspace $V \subseteq \F_2^{2n}$ and $z \in \F_2^{2n}$ (i.e., an affine subspace $z+V$) such that $\frac{2^n}{\abs{V}}\sum_{x \in V} f(x+z) \geq \Omega(\gamma^{361})$ and $\abs{V} \geq \Omega(\gamma^{367}) 2^n$.
	\end{restatable}
	\begin{proof}
		First apply \cref{thm:approx-subspace} to get an approximate subspace $S$ such that for all $x \in S$, $2^n f(x) \geq \frac{\gamma}{4}$.
		Use \cref{lem:additive-comb} with $\eps = \Omega(\gamma^5)$ to find a subset $S^\prime \subseteq S$ and subspace $V \subseteq \F_2^{2n}$ and $z \in \F_2^{2n}$ such that $z + V$ shares $O(\gamma^{360})\abs{S^\prime}$ elements with $S^\prime$ and $\abs{S^\prime} \geq O(\gamma^5) \abs{S}$. Since $\abs{S} \geq \Omega(\gamma^2 )2^n$,
		\[\abs{V} \geq \abs{S^\prime \cap (z+V)} \geq O(\gamma^{360})\abs{S^\prime} \geq O(\gamma^{365})\abs{S} \geq \Omega(\gamma^{367}) 2^n.\]
		Furthermore, since each element of $S$, and therefore $S^\prime$, has large $f(x)$, we get that 
		\[
		\frac{2^n}{\abs{V}}\sum_{x \in V} f(x+z) \geq \frac{2^n}{\abs{S^\prime}}\sum_{x \in z+V} f(x) \geq \frac{2^n}{\abs{S^\prime}}\sum_{x \in S^\prime \cap (z + V)} f(x) \geq 2^n \frac{\abs{S^\prime \cap (z+V)}}{\abs{S^\prime}} \frac{\gamma}{4 \cdot 2^n} \geq \Omega(\gamma^{361}). \qedhere
		\]
	\end{proof}

	To turn this affine subspace into a proper subspace, we note that undoing the affine shift cannot reduce the sum (i.e, $\sum_{x \in V} \hat{p}(x+z) \leq \sum_{x \in V} \hat{p}(x)$).
	The proof of \cite[Claim 4.13]{arunachalam2024toleranttestingstabilizerstates}, which performs this step, uses the fact that $\hat{p}_\Psi = \frac{p_\Psi}{2^n}$ and thus does not immediately extend to mixed states. 
	We replace both it and \cite[Claim 4.12]{arunachalam2024toleranttestingstabilizerstates} with a simple proof that does not rely on this identity.
	Instead, it only uses the fact that the Fourier coefficients of $\hat{p}_\rho$ are non-negative.
	As a consequence, this property also immediately extends to $p_\rho$ as well.
	
	\begin{lemma}\label{lem:proper-subspace-more-mass}
		Let $f : \F_2^{2n} \rightarrow \mathbb{R}$ such that $\hat{f}(a) \geq 0$ for all $a \in \F_2^{2n}$.
		Then for any subspace $V \subseteq \F_2^{2n}$ and $z \in \F_2^{2n}$
		\[
		\sum_{x \in V} f(x) \geq \sum_{x \in V} f(x+z).
		\]
	\end{lemma}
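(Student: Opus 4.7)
The plan is to expand both sides using the symplectic Fourier inversion formula $f(y) = \sum_{a \in \F_2^{2n}} (-1)^{[a,y]} \hat{f}(a)$, then collapse the inner sum over $V$ using \cref{lemma:sum-over-characters}, and finally exploit the non-negativity of $\hat{f}$ to conclude that shifting by $z$ can only introduce cancellations that reduce the sum.

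Concretely, first I would write
\[
\sum_{x \in V} f(x+z) \;=\; \sum_{x \in V} \sum_{a \in \F_2^{2n}} (-1)^{[a,x+z]} \hat{f}(a) \;=\; \sum_{a \in \F_2^{2n}} (-1)^{[a,z]} \hat{f}(a) \sum_{x \in V} (-1)^{[a,x]},
\]
and apply \cref{lemma:sum-over-characters} to the inner character sum, which vanishes unless $a \in V^\sympcomp$ (where it equals $\abs{V}$). This yields
\[
\sum_{x \in V} f(x+z) \;=\; \abs{V} \sum_{a \in V^\sympcomp} (-1)^{[a,z]} \hat{f}(a).
\]
The same manipulation with $z=0$ (equivalently, a direct application of \cref{fact:duality}) gives $\sum_{x \in V} f(x) = \abs{V} \sum_{a \in V^\sympcomp} \hat{f}(a)$.

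The last step is to subtract: the difference is
\[
\sum_{x \in V} f(x) - \sum_{x \in V} f(x+z) \;=\; \abs{V} \sum_{a \in V^\sympcomp} \bigl(1 - (-1)^{[a,z]}\bigr) \hat{f}(a) \;=\; 2\abs{V} \sum_{\substack{a \in V^\sympcomp \\ [a,z]=1}} \hat{f}(a),
\]
which is non-negative because every $\hat{f}(a)$ is non-negative by hypothesis. This gives the desired inequality.

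There is no real obstacle here; the only thing to be a little careful about is keeping the roles of $V$ and $V^\sympcomp$ straight under the symplectic (as opposed to standard) Fourier transform, and noting that $[a,z] \in \F_2$ so $(-1)^{[a,z]} \in \bits$ as needed. The proof is entirely independent of whether $f$ arises from a pure or mixed state — all we use is non-negativity of $\hat{f}$, which (by \cref{fact:p-hat-non-negative}) holds for $\hat{p}_\rho$, and which also trivially holds for $p_\rho$ itself since $\wh{p_\rho} \geq 0$ by \cref{lem:p-hat} together with \cref{fact:double-hat} (equivalently, applying the lemma to $\hat{p}_\rho$ in place of $f$ recovers the bound for $p_\rho$ via \cref{fact:double-hat}).
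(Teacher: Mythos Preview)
Your proof is correct and takes essentially the same approach as the paper: both pass to the Fourier side via duality and use non-negativity of $\hat f$ on $V^\sympcomp$. The paper phrases it by introducing $U = \langle z\rangle + V$ and observing $V^\sympcomp \setminus U^\sympcomp = \{a \in V^\sympcomp : [a,z]=1\}$, which is exactly the set you isolate directly; your version is slightly more streamlined since it avoids the case split on $z \in V$ and the auxiliary subspace $U$.
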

	\begin{proof}
		If $z \in V$ then the statement is trivial, so assume now that $z \not\in V$.
		Let $U = \langle z \rangle + V$ be the subspace created by adding $z$ as a generator to $V$.
		Note that $\abs{U} = 2 \abs{V}$ and \[\sum_{x \in V} f(x + z) = \sum_{x \in U} f(x) - \sum_{x \in V} f(x).\]
		Because $V \subset U$ we see that $U^\sympcomp \subset V^\sympcomp$.
		Since $\hat{f}(x) \geq 0$, by \cref{fact:duality}
		\[
		0 \leq 2\abs{V}\left(\sum_{x \in V^\sympcomp} - \sum_{x \in U^\sympcomp}\right) \hat{f}(x) = \left( 2\sum_{x \in V} -  \sum_{x \in U} \right) f(x) = \sum_{x \in V} f(x) - \sum_{x \in V} f(x+z). \qedhere
		\]
	\end{proof}
	
	Finally, we show that there must be a large isotropic subspace within $V$.
	This will (eventually) allow us to appy \cref{cor:fidelity-lower}.
	
	The following is implicit in the proof of \cite[Fact 2.5]{arunachalam2024notepolynomialtimetoleranttesting} and {\cite[Fact 4.16]{arunachalam2024toleranttestingstabilizerstates}}.
	They implicitly show that \[2^n \sum_{x \in A} p_\Psi(x) \leq \sqrt{\abs{A}}\] for symplectic subspace $A$
	and then immediately use the fact that $p_\Psi = \frac{\hat{p}_\Psi}{2^n}$ in the proof.
	They then proceeds to analyze $4^n \sum_{x \in A} \hat{p}_\Psi$ using the fact that the Pauli group forms a 1-unitary design.
	Since we are starting with $\hat{p}_\rho$ in \cref{lem:symplectic-subspace-upper}, this conversion step is completely avoided and the following holds.
	
	\begin{lemma}[{\cite[Proof of Fact 4.16]{arunachalam2024toleranttestingstabilizerstates}}]\label{lem:symplectic-subspace-upper}
		For subspace $A \subseteq \F_2^{2n}$ such that $A \cap A^\sympcomp = \{0^{2n}\}$ (i.e., a symplectic subspace)
		\[
		4^n \sum_{x \in A} \hat{p}_\rho(x) \leq \sqrt{\abs{A}}.
		\]
	\end{lemma}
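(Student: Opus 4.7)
The plan is to first convert the left-hand side into a sum of $p_\rho$ over the complementary symplectic subspace using Fourier duality, and then recognize that sum as (essentially) the purity of a reduced density matrix.

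First, I would apply \cref{fact:duality} with $f = \hat p_\rho$ together with \cref{fact:double-hat} to obtain
\[
\sum_{x \in A} \hat p_\rho(x) \;=\; |A| \sum_{x \in A^\sympcomp} \hat{\hat p}_\rho(x) \;=\; \frac{|A|}{4^n} \sum_{x \in A^\sympcomp} p_\rho(x),
\]
so that the target inequality is equivalent to $\sum_{x \in A^\sympcomp} p_\rho(x) \le 1/\sqrt{|A|}$. Since $A$ is symplectic, $A^\sympcomp$ is symplectic as well, and $|A|\cdot|A^\sympcomp|=4^n$; writing $|A|=4^k$, we have $|A^\sympcomp|=4^{n-k}$.

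Next, I would invoke the Clifford-invariance of $p_\rho$ (more precisely, the fact that Clifford conjugation permutes Weyl operators up to sign, hence permutes the function $p_\rho$ across $\F_2^{2n}$) to put $A^\sympcomp$ in a canonical form. Concretely, because the Clifford group acts transitively on symplectic subspaces of a fixed dimension, there exists a Clifford $C$ such that $\{W_x : x \in A^\sympcomp\}$ is mapped (up to signs) to the Paulis acting on the last $n-k$ qubits only, tensored with identity on the first $k$ qubits. Writing $\rho' = C\rho C^\dagger$ and $\rho'_B = \tr_{[k]} \rho'$ for the reduced state on these $n-k$ qubits, this yields
\[
\sum_{x \in A^\sympcomp} p_\rho(x) \;=\; \frac{1}{2^n}\sum_{y} \tr^2(Q_y \rho'_B),
\]
where the sum on the right is over \emph{all} $2(n-k)$-qubit Pauli operators $Q_y$.

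Now \cref{fact:parseval-weyl} applied to $\rho'_B$ on $n-k$ qubits immediately gives $\sum_y \tr^2(Q_y \rho'_B) = 2^{n-k}\tr(\rho'^2_B)$, so
\[
\sum_{x \in A^\sympcomp} p_\rho(x) \;=\; \frac{\tr(\rho'^2_B)}{2^k} \;\le\; \frac{1}{2^k} \;=\; \frac{1}{\sqrt{|A|}},
\]
using $\tr(\rho'^2_B)\le 1$ since $\rho'_B$ is a density matrix. Multiplying by $|A|$ closes the argument: $4^n \sum_{x \in A}\hat p_\rho(x) = |A|\sum_{x \in A^\sympcomp} p_\rho(x) \le \sqrt{|A|}$.

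\paragraph{Main obstacle.} The cleanly-stated steps above hide the one technical point that I expect to be the bottleneck: the Clifford reduction to canonical form. One must verify that symplectic transformations act transitively on symplectic subspaces of a fixed dimension, and that Clifford conjugation really does induce the corresponding permutation on $\F_2^{2n}$ leaving $p_\rho$ invariant up to relabeling. If one prefers to avoid this, the same bound can be obtained coordinate-free via the Pauli-twirl identity $\sum_{x \in A} W_x \rho W_x = \frac{|A|}{2^n}\sum_{y \in A^\sympcomp}\tr(W_y \rho)W_y$ (immediate from \cref{fact:weyl-anticommute} and \cref{lemma:sum-over-characters}), which reduces the problem to bounding $\sum_{y\in A^\sympcomp}\tr^2(W_y \rho)$ by $2^{n-k}\tr(\rho^2)\le 2^{n-k}$ — a bound that still, however, rests on recognizing that the Weyl operators indexed by the symplectic subspace $A^\sympcomp$ form a faithful Pauli system on $n-k$ virtual qubits.
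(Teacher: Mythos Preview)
Your proof is correct and amounts to the same argument the paper cites: after a Clifford change of coordinates placing the symplectic subspace in canonical form, the sum becomes $2^k$ times the purity of a reduced $(n-k)$-qubit state, which is at most $1$. Your ``coordinate-free'' alternative via the Pauli twirl is exactly the $1$-design argument the paper's remark describes, and your main route through \cref{fact:duality} and Parseval on $A^\sympcomp$ is just a harmless reordering of the same computation.
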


	For completeness sake, we also give a proof that holds for $p_\rho$.
	
	\begin{corollary}\label{cor:symplectic-subspace-upper}
		For subspace $A \subseteq \F_2^{2n}$ such that $A \cap A^\sympcomp = \{0^{2n}\}$ (i.e., a symplectic subspace)
		\[
		2^n \sum_{x \in A} p_\rho(x) \leq \sqrt{\abs{A}}.
		\]
	\end{corollary}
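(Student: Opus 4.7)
The plan is to derive this corollary directly from \cref{lem:symplectic-subspace-upper} by invoking the duality \cref{fact:duality}, applied to the symplectic complement of $A$. The point is that the previous lemma bounds sums of $\hat{p}_\rho$ over a symplectic subspace, and duality is exactly the tool that converts a sum of $p_\rho$ over a subspace $T$ into a sum of $\hat{p}_\rho$ over $T^\sympcomp$.

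Concretely, I would first observe that if $A$ is symplectic, then so is $A^\sympcomp$: since the symplectic form on $\F_2^{2n}$ is non-degenerate, $(A^\sympcomp)^\sympcomp = A$, so $A^\sympcomp \cap (A^\sympcomp)^\sympcomp = A^\sympcomp \cap A = \{0^{2n}\}$. Non-degeneracy also gives $\abs{A} \cdot \abs{A^\sympcomp} = 4^n$, a fact I will need for the arithmetic at the end.

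Next, I would apply \cref{fact:duality} with $T = A$ and $f = p_\rho$ to get
\[
\sum_{x \in A} p_\rho(x) \;=\; \abs{A} \sum_{x \in A^\sympcomp} \hat{p}_\rho(x).
\]
Since $A^\sympcomp$ is symplectic, \cref{lem:symplectic-subspace-upper} bounds the right-hand sum by $\sqrt{\abs{A^\sympcomp}}/4^n = \sqrt{4^n/\abs{A}}/4^n = 1/(2^n \sqrt{\abs{A}})$. Substituting and multiplying through by $2^n$ yields $2^n \sum_{x \in A} p_\rho(x) \leq \sqrt{\abs{A}}$, which is the desired bound.

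There is no real obstacle here; the only thing to be careful about is verifying that $A^\sympcomp$ is itself symplectic so that the previous lemma applies to it, together with the cardinality identity $\abs{A}\cdot\abs{A^\sympcomp}=4^n$. Everything else is just routine substitution.
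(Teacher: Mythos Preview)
Your proposal is correct and matches the paper's own proof essentially line for line: the paper also notes that $A^\sympcomp$ is symplectic with $\abs{A}\cdot\abs{A^\sympcomp}=4^n$, applies \cref{fact:duality} to convert $\sum_{x\in A} p_\rho(x)$ into $\abs{A}\sum_{x\in A^\sympcomp}\hat p_\rho(x)$, and then invokes \cref{lem:symplectic-subspace-upper} on $A^\sympcomp$ to finish.
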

	\begin{proof}
		Note that $A^\sympcomp$ is also a symplectic subspace and that $\abs{A} \cdot \abs{A^\sympcomp} = 4^n$.
		Using \cref{fact:duality} followed by \cref{lem:symplectic-subspace-upper}, we find that
		\[
		2^n \sum_{x \in A} p_\rho(x) = 2^n \abs{A} \sum_{x \in A^\sympcomp} \hat{p}_\rho(x) \leq \frac{\abs{A} \sqrt{\abs{A^\sympcomp}}}{2^n} = \sqrt{\abs{A}}. \qedhere
		\]
	\end{proof}
	
	We now complete the proof of soundness by using \emph{mutually unbiased basis}, which was inspired by \cite{bao2024toleranttestingstabilizerstates}.
	
	\begin{fact}[{\cite{Bandyopadhyay2002mub}}]\label{fact:mutually-unbiased-basis}
		Every symplectic subspace $S$ of dimension $2k$ can be covered via $2^k + 1$ isotropic subspaces $T_1, \dots, T_{2k+1} \subset S$. These isotropic subspaces are all of dimension $k$ and thus only have the trivial intersection.
	\end{fact}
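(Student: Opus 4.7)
The plan is to reduce to the standard symplectic form on $\F_2^{2k}$ and then exhibit $2^k+1$ Lagrangian subspaces with pairwise trivial intersection using finite field arithmetic. By the symplectic (Darboux) basis theorem, any non-degenerate alternating form on a $2k$-dimensional $\F_2$-vector space is equivalent to the standard one, so I may assume $S = \F_2^{2k}$ with $[(a,b),(c,d)] = a\cdot d + b\cdot c$. Since an isotropic subspace of a symplectic space of dimension $2k$ has dimension at most $k$ with equality precisely for Lagrangian subspaces, it suffices to produce $2^k+1$ Lagrangians that pairwise meet only at $0$.

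Next I would identify $\F_2^{2k} \cong \F_{2^k} \oplus \F_{2^k}$ as $\F_2$-vector spaces and (after a change of $\F_2$-basis if needed) transport the symplectic form to $B((x,y),(x',y')) = \mathrm{Tr}_{\F_{2^k}/\F_2}(xy' + x'y)$, which is again non-degenerate and alternating. For each $\lambda \in \F_{2^k}$, define $L_\lambda \coloneqq \{(x, \lambda x) : x \in \F_{2^k}\}$, together with the ``vertical'' Lagrangian $L_\infty \coloneqq \{(0,y) : y \in \F_{2^k}\}$. Each is a $k$-dimensional $\F_2$-subspace, and a direct computation gives $B((x,\lambda x),(y,\lambda y)) = \mathrm{Tr}(\lambda xy + \lambda xy) = 0$ in characteristic $2$, so each $L_\lambda$ is isotropic and thus Lagrangian by dimension count; $L_\infty$ is Lagrangian by inspection.

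For distinct $\lambda, \mu \in \F_{2^k}$, the equation $(x,\lambda x) = (y, \mu y)$ forces $(\lambda-\mu)x = 0$ and hence $x = 0$, so $L_\lambda \cap L_\mu = \{0\}$; trivial intersection with $L_\infty$ is immediate. Counting, $(2^k+1)(2^k-1) + 1 = 4^k = \abs{S}$, so these $2^k+1$ Lagrangians partition $S \setminus \{0\}$ as required. The main subtle step is the trace-form identification in the middle paragraph: one must verify that the standard symplectic form on $\F_2^{2k}$ pulls back to $\mathrm{Tr}(xy' + x'y)$ on $\F_{2^k} \oplus \F_{2^k}$, which follows once more from the equivalence of non-degenerate alternating $\F_2$-forms on a $2k$-dimensional space. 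If one wants to avoid finite-field machinery altogether, an alternative is to parametrize Lagrangians transverse to $L_\infty$ as graphs $\{(v, Mv) : v \in \F_2^k\}$ of symmetric $\F_2$-linear maps $M$, and then to exhibit $2^k$ such $M$ with all pairwise differences invertible by again identifying $\F_2^k \cong \F_{2^k}$ via a self-dual basis and taking $M$ to be multiplication by elements of $\F_{2^k}$.
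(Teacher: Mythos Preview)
The paper does not prove this statement; it is quoted as a known fact from the mutually unbiased bases literature and used as a black box in the proof of \cref{thm:stabilizer-cover}. Your argument is correct and is essentially the classical construction behind that citation: reduce via a symplectic (Darboux) basis to the model space $\F_{2^k}\oplus\F_{2^k}$ with the alternating form $\mathrm{Tr}(xy'+x'y)$, and take the $2^k+1$ Lagrangians $L_\lambda=\{(x,\lambda x):x\in\F_{2^k}\}$ for $\lambda\in\F_{2^k}$ together with $L_\infty=\{0\}\oplus\F_{2^k}$. The isotropy and pairwise-trivial-intersection checks are exactly as you wrote, and the cardinality identity $(2^k+1)(2^k-1)+1=4^k$ forces these subspaces to cover $S$; the only point you flagged as needing care --- that the trace pairing is non-degenerate and alternating so that Darboux applies --- is routine in characteristic~$2$.
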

	
	\begin{restatable}{theorem}{stabilizercover}\label{thm:stabilizer-cover}
		\[\text{If } 4^n \sum_{x \in \F_2^{2n}} p_\rho(x)^3 \geq \gamma, \text{ then } \calF(\rho) \geq \Omega(\gamma^{1089}).\]
	\end{restatable}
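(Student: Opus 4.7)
The plan is to apply \cref{thm:actual-subspace} to the function $f \coloneqq 2^n \hat p_\rho$, transport the resulting mass concentration to a $p_\rho$-mass bound on the symplectic complement of some subspace, and then use a mutually unbiased basis covering to extract an isotropic (hence Lagrangian) subspace with large $p_\rho$-mass. The novelty throughout is that, because $\hat p_\rho$ is no longer proportional to $p_\rho$ in the mixed case, every transition between the two pictures must be mediated by \cref{fact:duality} rather than the pure-state identity $\hat p_\Psi = p_\Psi/2^n$.

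Since $f$ satisfies $0 \le f \le 1/2^n$ and $\sum_x f(x) = 1$ by \cref{fact:p-hat-non-negative,fact:p-hat-sum-1}, and since the hypothesis $4^n \sum_x p_\rho(x)^3 \ge \gamma$ rewrites by \cref{lem:linearity} as $2^n \sum_{x,y} f(x) f(y) f(x+y) \ge \gamma$, \cref{thm:actual-subspace} produces a subspace $V$ and a shift $z$ with $|V| \ge \Omega(\gamma^{367})\,2^n$ and $\tfrac{2^n}{|V|}\sum_{x \in V} f(x+z) \ge \Omega(\gamma^{361})$ (this is the density actually attained in the proof of \cref{thm:actual-subspace}). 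Since $\widehat{\hat p_\rho} = p_\rho/4^n \ge 0$, \cref{lem:proper-subspace-more-mass} lets me discard the shift, giving $\sum_V \hat p_\rho \ge |V|\,\Omega(\gamma^{361})/4^n$. Then \cref{fact:duality} converts this to
\[
    \sum_{x \in V^\sympcomp} p_\rho(x) \;=\; \frac{4^n}{|V|}\sum_{x \in V} \hat p_\rho(x) \;\ge\; \Omega(\gamma^{361}).
\]

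Now I decompose $V^\sympcomp = R \oplus W^*$, where $R = V \cap V^\sympcomp$ is the (isotropic) radical and $W^*$ is a symplectic complement of dimension $2k^*$, and use \cref{fact:mutually-unbiased-basis} to cover $W^*$ by $2^{k^*}+1$ maximal isotropic subspaces $T_i^*$. Each $R \oplus T_i^*$ is itself isotropic of dimension $r + k^* \le n$, so pigeonholing $p_\rho$-mass across this cover and extending the winning $R \oplus T_i^*$ to any Lagrangian $L$ (which cannot decrease the mass since $p_\rho \ge 0$) yields, via \cref{lem:fidelity-lower}, $\calF(\rho) \ge \sum_L p_\rho \ge \Omega(\gamma^{361})/(2^{k^*}+1)$.

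The main obstacle is bounding $2^{k^*}$. I handle this by applying \cref{cor:symplectic-subspace-upper} to $W^*$ to obtain $\sum_{W^*} p_\rho \le 2^{k^*}/2^n$, which \cref{fact:duality} then rewrites as $\sum_{(W^*)^\sympcomp} \hat p_\rho \le 1/2^{n+k^*}$. Since $W^* \subseteq V^\sympcomp$ implies $V \subseteq (W^*)^\sympcomp$, this forces $\sum_V \hat p_\rho \le 1/2^{n+k^*}$. Combining with the lower bound $\sum_V \hat p_\rho \ge \Omega(\gamma^{728})/2^n$ (obtained by substituting $|V| \ge \Omega(\gamma^{367})\,2^n$ into the lower bound from the second paragraph) forces $2^{k^*} \le O(\gamma^{-728})$, and therefore $\calF(\rho) \ge \Omega(\gamma^{361}) \cdot \Omega(\gamma^{728}) = \Omega(\gamma^{1089})$, as claimed.
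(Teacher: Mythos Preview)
Your proof is correct. The overall architecture matches the paper's—invoke \cref{thm:actual-subspace} on $f=2^n\hat p_\rho$, strip the shift via \cref{lem:proper-subspace-more-mass}, split off the symplectic part, bound its size, and cover via \cref{fact:mutually-unbiased-basis}—but you execute the second half in the dual picture. The paper stays with $\hat p_\rho$ on $V$: it decomposes $V$ itself via symplectic Gram--Schmidt into an isotropic part plus a symplectic $S$ of dimension $2k$, bounds $\sqrt{|S|}\le O(\gamma^{-361})$ directly from \cref{lem:symplectic-subspace-upper} (using \cref{lem:proper-subspace-more-mass} on the cosets of $S$ inside $V$), averages $\hat p_\rho$ over the $2^k+1$ isotropic pieces of $V$, and finishes with \cref{cor:fidelity-lower}. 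You instead pass through \cref{fact:duality} to $p_\rho$ on $V^\sympcomp$, decompose $V^\sympcomp=R\oplus W^*$, and use the $p_\rho$-side tools \cref{cor:symplectic-subspace-upper} and \cref{lem:fidelity-lower}. Your bound $2^{k^*}\le O(\gamma^{-728})$ is weaker than the paper's $2^k\le O(\gamma^{-361})$, but this is exactly the expected discrepancy (since $k^*=k+(n-\dim V)$ and $2^{n-\dim V}\le O(\gamma^{-367})$), and it is compensated by your averaging step starting from total mass $\Omega(\gamma^{361})$ rather than density $\Omega(\gamma^{361})\cdot|V|/4^n$; both routes land at $361+367+361=1089$. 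One minor note: your bound on $2^{k^*}$ detours back through $\hat p_\rho$ on $(W^*)^\sympcomp\supseteq V$; an equivalent argument staying entirely in the $p_\rho$ picture would cover $V^\sympcomp$ by $|R|$ cosets of $W^*$, apply \cref{lem:proper-subspace-more-mass} to each, and invoke \cref{cor:symplectic-subspace-upper}—this is the exact dual of what the paper does with $S$ inside $V$.
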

	\begin{proof}
		By \cref{lem:linearity}, the fact that $2^n \hat{p}$ forms a distribution, and \cref{thm:actual-subspace}, there exists a subspace $V \subseteq \F_2^{2n}$ and $z \in \F_2^{2n}$ such that $\frac{4^n}{\abs{V}} \sum_{x \in V} \hat{p}(x+z) \geq \Omega(\gamma^{361})$ and $\abs{V} \geq \Omega(\gamma^{367}) 2^n$.
		By \cref{lem:proper-subspace-more-mass}, \cref{fact:double-hat} and the fact that $p(x) \geq 0$ we see that, $4^n \sum_{x \in V} \hat{p}(x) \geq \Omega(\gamma^{329}) \abs{V}$ as well.

		Via symplectic Gram-Schmidt \cite{fattal2004entanglementstabilizerformalism,grewal2024pseudoentanglementaintcheap}, it is well-known that every subspace $V \in \F_2^{2n}$ of dimension $d$ has generators of the form \[x_1, z_1, x_2, z_2, \dots, x_k, z_k, z_{k+1}, \dots, z_{d - 2k} \in \F_2^{2n}\] such that \[[x_i, z_j] = \delta_{ij} \text{ and } [x_i, x_j] = 0 = [z_i, z_j].\]
		Let $S$ the symplectic subspace generated by $x_1, z_1, \dots, x_k, z_k$ and let $T_1, \dots T_{2k+1} \subset S$ be the mutually unbiased basis of $S$ from \cref{fact:mutually-unbiased-basis}.
		Taking the generators of $T_i$ and combining it with $\{z_k, \dots z_{d-2k}\}$ naturally generates an isotropic subspace $T_i^\prime \subset V$ such that $\bigcup_{i=1}^{2k+1} T_i^\prime = V$.
		So if we can show that $\abs{S} = 4^k$ is small then, by an averaging argument, there exists an isotropic subspace $T$ such that \[2^n \sum_{x \in T} \hat{p}_\rho(x) \geq \Omega(\gamma^{361}) \frac{\abs{V}}{(2^k + 1) 2^n} \geq \frac{\Omega(\gamma^{728})}{\sqrt{\abs{S}} + 1}.\]
		Using \cref{cor:fidelity-lower} and the fact that $\hat{p}_\rho(x) \geq 0$ (i.e., we can extend $T$ to some arbitrary Lagrangian subspace that contains $T$) would complete the proof.
		
		To actually upper bound $\abs{S}$, we use \cref{lem:proper-subspace-more-mass} and \cref{lem:symplectic-subspace-upper} to show that, \[O(\gamma^{361}) \abs{V} \leq 4^n \sum_{x \in V} \hat{p}_\rho(x) =  4^n\sum_{x\in S}\sum_{i=0}^{\frac{\abs{V}}{\abs{S}} - 1} \hat{p}_\rho(x + z_{i + k}) \leq 4^n \frac{\abs{V}}{\abs{S}}\sum_{x\in S}\hat{p}_\rho(x) \leq \sqrt{S} \frac{\abs{V}}{\abs{S}} = \frac{\abs{V}}{\sqrt{\abs{S}}}\]
		so $\sqrt{\abs{S}} = O(\gamma^{-361})$. Thus $\calF(\rho) \geq \Omega(\gamma^{1089})$.
	\end{proof}

	\begin{remark}
		By taking the union of $V$ and $z+V$ (rather  than just looking at $V$), one can arrive at a bigger subspace $V^\prime$ such that $4^n \sum_{x \in V^\prime} \hat{p}(x) \geq \poly(\gamma) \abs{V^\prime}$ and $\abs{V^\prime} = 2\abs{V} \geq \poly(\gamma) 2^n$.
		This does not change the final exponent in \cref{thm:stabilizer-cover}, but it does improve the constant hidden by the big O notation by a factor of $2$.
	\end{remark}
	
	\section{Conclusion}
	We have now shown how the quantity measured in our new test is both upper and lower bounded by the stabilizer fidelity.
	Please refer to \cite[Theorem 1]{bao2024toleranttestingstabilizerstates} for how to turn \cref{cor:eta-lower} and \cref{thm:stabilizer-cover} into a property testing algorithm, as they achieve similar bounds in the exponent.
	
	As noted earlier, \cref{thm:stabilizer-cover} follows the proofs of \cite{arunachalam2024toleranttestingstabilizerstates,arunachalam2024notepolynomialtimetoleranttesting,bao2024toleranttestingstabilizerstates}, whose bounds have since been improved by \cite{mehraban2024improvedboundstestinglow}.
	We leave it as an open problem if the proof of \cite[Theorem 1.1]{mehraban2024improvedboundstestinglow} can be made to work for mixed states, as well as matching any other future improvements.
	
	\section*{Acknowledgements}
	We thank Marcel Hinsche for making us aware of this `ancilla-free' version of the SWAP test and for other instructive feedback on prior work regarding stabilizer state property testing. We also thank Sid Jain for helpful comments.
	We thank Arkopal Dutt for making us aware of errors in a preliminary draft pertaining to \cref{lem:additive-comb}.
	This work was done [in part] while the authors were visiting the Simons Institute for the Theory of Computing. VI is supported by an NSF Graduate Research Fellowship.
	DL is supported by the US NSF award FET-2243659 and US NSF Award CCF-222413.
	
	\bibliographystyle{alphaurl}
	\bibliography{refs}

\newcommand{\etalchar}[1]{$^{#1}$}
\begin{thebibliography}{GIKL24b}

\bibitem[ABD24]{arunachalam2024notepolynomialtimetoleranttesting}
Srinivasan Arunachalam, Sergey Bravyi, and Arkopal Dutt.
\newblock A note on polynomial-time tolerant testing stabilizer states, 2024.
\newblock URL: \url{https://arxiv.org/abs/2410.22220}, \href {https://arxiv.org/abs/2410.22220} {\path{arXiv:2410.22220}}.

\bibitem[AD24]{arunachalam2024toleranttestingstabilizerstates}
Srinivasan Arunachalam and Arkopal Dutt.
\newblock Towards tolerant testing stabilizer states, 2024.
\newblock URL: \url{https://arxiv.org/abs/2408.06289}, \href {https://arxiv.org/abs/2408.06289} {\path{arXiv:2408.06289}}.

\bibitem[BBRV02]{Bandyopadhyay2002mub}
Somshubhro Bandyopadhyay, P.~Oscar Boykin, Vwani Roychowdhury, and Farrokh Vatan.
\newblock {A New Proof for the Existence of Mutually Unbiased Bases }.
\newblock {\em Algorithmica}, 34:512--528, Nov 2002.
\newblock \href {https://doi.org/10.1007/s00453-002-0980-7} {\path{doi:10.1007/s00453-002-0980-7}}.

\bibitem[BGJ23]{bu2023stabilizertestingmagicentropy}
Kaifeng Bu, Weichen Gu, and Arthur Jaffe.
\newblock Stabilizer testing and magic entropy, 2023.
\newblock URL: \url{https://arxiv.org/abs/2306.09292}, \href {https://arxiv.org/abs/2306.09292} {\path{arXiv:2306.09292}}.

\bibitem[BvDH24]{bao2024toleranttestingstabilizerstates}
Zongbo Bao, Philippe van Dordrecht, and Jonas Helsen.
\newblock Tolerant testing of stabilizer states with a polynomial gap via a generalized uncertainty relation, 2024.
\newblock URL: \url{https://arxiv.org/abs/2410.21811}, \href {https://arxiv.org/abs/2410.21811} {\path{arXiv:2410.21811}}.

\bibitem[CGYZ24]{chen2024stabilizerbootstrappingrecipeefficient}
Sitan Chen, Weiyuan Gong, Qi~Ye, and Zhihan Zhang.
\newblock Stabilizer bootstrapping: A recipe for efficient agnostic tomography and magic estimation, 2024.
\newblock URL: \url{https://arxiv.org/abs/2408.06967}, \href {https://arxiv.org/abs/2408.06967} {\path{arXiv:2408.06967}}.

\bibitem[Dam18]{damanik2018optimality}
Raja Oktovin~Parhasian Damanik.
\newblock {Optimality in Stabilizer Testing}.
\newblock {\em Master's Thesis, August}, 2018.
\newblock URL: \url{https://eprints.illc.uva.nl/id/eprint/1622/1/MoL-2018-09.text.pdf}.

\bibitem[FCY{\etalchar{+}}04]{fattal2004entanglementstabilizerformalism}
David Fattal, Toby~S. Cubitt, Yoshihisa Yamamoto, Sergey Bravyi, and Isaac~L. Chuang.
\newblock Entanglement in the stabilizer formalism, 2004.
\newblock \href {https://arxiv.org/abs/quant-ph/0406168} {\path{arXiv:quant-ph/0406168}}.

\bibitem[GIKL23a]{grewal2023efficient}
Sabee Grewal, Vishnu Iyer, William Kretschmer, and Daniel Liang.
\newblock {Efficient Learning of Quantum States Prepared With Few Non-{C}lifford Gates}, 2023.
\newblock \href {https://arxiv.org/abs/2305.13409} {\path{arXiv:2305.13409}}.

\bibitem[GIKL23b]{grewal_et_al:LIPIcs.ITCS.2023.64}
Sabee Grewal, Vishnu Iyer, William Kretschmer, and Daniel Liang.
\newblock {Low-Stabilizer-Complexity Quantum States Are Not Pseudorandom}.
\newblock In {\em 14th Innovations in Theoretical Computer Science Conference (ITCS 2023)}, volume 251 of {\em Leibniz International Proceedings in Informatics (LIPIcs)}, pages 64:1--64:20, 2023.
\newblock \href {https://doi.org/10.4230/LIPIcs.ITCS.2023.64} {\path{doi:10.4230/LIPIcs.ITCS.2023.64}}.

\bibitem[GIKL24a]{grewal2023improved}
Sabee Grewal, Vishnu Iyer, William Kretschmer, and Daniel Liang.
\newblock {Improved Stabilizer Estimation via Bell Difference Sampling}.
\newblock In {\em Proceedings of the 56th Annual ACM Symposium on Theory of Computing}, STOC 2024, page 1352–1363, New York, NY, USA, 2024. Association for Computing Machinery.
\newblock \href {https://doi.org/10.1145/3618260.3649738} {\path{doi:10.1145/3618260.3649738}}.

\bibitem[GIKL24b]{grewal2024pseudoentanglementaintcheap}
Sabee Grewal, Vishnu Iyer, William Kretschmer, and Daniel Liang.
\newblock Pseudoentanglement ain't cheap, 2024.
\newblock URL: \url{https://arxiv.org/abs/2404.00126}, \href {https://arxiv.org/abs/2404.00126} {\path{arXiv:2404.00126}}.

\bibitem[GNW21]{gross2021schur}
David Gross, Sepehr Nezami, and Michael Walter.
\newblock {Schur--Weyl duality for the Clifford group with applications: Property testing, a robust Hudson theorem, and de Finetti representations}.
\newblock {\em Communications in Mathematical Physics}, 385(3):1325--1393, 2021.
\newblock \href {https://doi.org/10.1007/s00220-021-04118-7} {\path{doi:10.1007/s00220-021-04118-7}}.

\bibitem[HG23]{hangleiter2023bell}
Dominik Hangleiter and Michael~J. Gullans.
\newblock Bell sampling from quantum circuits, 2023.
\newblock \href {https://arxiv.org/abs/2306.00083v1} {\path{arXiv:2306.00083v1}}.

\bibitem[HH24]{hinsche2024singlecopystabilizertesting}
Marcel Hinsche and Jonas Helsen.
\newblock Single-copy stabilizer testing, 2024.
\newblock URL: \url{https://arxiv.org/abs/2410.07986}, \href {https://arxiv.org/abs/2410.07986} {\path{arXiv:2410.07986}}.

\bibitem[HK23]{haug2023scalable}
Tobias Haug and M.~S. Kim.
\newblock {Scalable Measures of Magic Resource for Quantum Computers}.
\newblock {\em PRX Quantum}, 4(1):010301, 2023.
\newblock \href {https://doi.org/10.1103/PRXQuantum.4.010301} {\path{doi:10.1103/PRXQuantum.4.010301}}.

\bibitem[HLK24]{Haug2024efficient}
Tobias Haug, Soovin Lee, and M.~S. Kim.
\newblock Efficient quantum algorithms for stabilizer entropies.
\newblock {\em Phys. Rev. Lett.}, 132:240602, Jun 2024.
\newblock URL: \url{https://link.aps.org/doi/10.1103/PhysRevLett.132.240602}, \href {https://doi.org/10.1103/PhysRevLett.132.240602} {\path{doi:10.1103/PhysRevLett.132.240602}}.

\bibitem[Mon17]{montanaro-bell-sampling}
Ashley Montanaro.
\newblock {Learning stabilizer states by Bell sampling}, 2017.
\newblock \href {https://arxiv.org/abs/1707.04012} {\path{arXiv:1707.04012}}.

\bibitem[MT24]{mehraban2024improvedboundstestinglow}
Saeed Mehraban and Mehrdad Tahmasbi.
\newblock Improved bounds for testing low stabilizer complexity states, 2024.
\newblock URL: \url{https://arxiv.org/abs/2410.24202}, \href {https://arxiv.org/abs/2410.24202} {\path{arXiv:2410.24202}}.

\bibitem[MW16]{montanaro2016survey}
Ashley Montanaro and Ronald~{de} Wolf.
\newblock {\em {A Survey of Quantum Property Testing}}.
\newblock Number~7 in Graduate Surveys. Theory of Computing Library, 2016.
\newblock \href {https://doi.org/10.4086/toc.gs.2016.007} {\path{doi:10.4086/toc.gs.2016.007}}.

\bibitem[PRR06]{parnas2006tolerant}
Michal Parnas, Dana Ron, and Ronitt Rubinfeld.
\newblock Tolerant property testing and distance approximation.
\newblock {\em Journal of Computer and System Sciences}, 72(6):1012--1042, 2006.
\newblock \href {https://doi.org/10.1016/j.jcss.2006.03.002} {\path{doi:10.1016/j.jcss.2006.03.002}}.

\end{thebibliography}
	
	
	\appendix
	
	\section{Soundness for Mixed States in the Close Regime}\label{appendix:close}
	
	While the results of \cite{arunachalam2024toleranttestingstabilizerstates,arunachalam2024notepolynomialtimetoleranttesting,bao2024toleranttestingstabilizerstates,mehraban2024improvedboundstestinglow} are all conceptually nice, due to their (unoptimized) constants, the bounds are largely useless from a practical standpoint and should generally not be used, as is. This is despite the previous state-of-the-art bounds from \cite{gross2021schur,grewal2023improved} of $\calF(\ketbra{\psi}{\psi}) \geq \frac{4 \gamma - 1}{3}$, which only becomes non-trivial when $\gamma > \frac{1}{4}$.
	For instance, when $\eps_1 = 0.99$, using the current state-of-the-art bounds of $\calF(\ketbra{\psi}{\psi}) \geq \Omega(\gamma^{598})$ from \cite{mehraban2024improvedboundstestinglow} \emph{without the $\approx  10^{-2175}$ multiplicative constant hidden by the big O notation} leads to requiring $\eps_2 < 2.2 \times 10^{-16}$.
	Meanwhile, the previous bounds simply require $\eps_2 < 0.92$.
	When $\eps_1^6 = \frac{1}{4}$ (i.e., when the bounds of \cite{grewal2023improved} stop working and require $\eps_2 = 0$), the bound on $\eps_2$ instead becomes $\eps_2 \leq 2^{-1196} \approx 9.3 \times 10^{-361}$.
	
	As such, for the sake of practicality we match the bounds of \cite{gross2021schur,grewal2023improved} in the close regime when given mixed state inputs.
	
	The following proof works similarly to that of \cref{claim:many-large-p-hat}.
	
	\begin{lemma}[Generalization of {\cite[Lemma 7.6]{grewal2023improved}}]\label{lem:mass-in-M-lower-bound}
		For a $f : \F_2^{2n} \rightarrow \mathbb{R}^{\geq 0}$ such that $0 \leq f(x) \leq \frac{1}{2^n}$ and $\sum_{x \in \F_2^{2n}} f(x) \leq 1$, let $M = \{x \in \F_2^{2n} : 2^n f(x) > \frac{1}{2}\}$.
		Then
		\[
		\sum_{x \in M} f(x) \geq \frac{4 \left(4^n \sum_{x \in \F_2^{2n}} f(x)^3\right) - 1}{3} .
		\]
	\end{lemma}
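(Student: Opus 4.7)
\vspace{1em}
\noindent\textbf{Proof proposal.} Let $\gamma = 4^n \sum_{x \in \F_2^{2n}} f(x)^3$ so that the goal is to show $\sum_{x \in M} f(x) \geq (4\gamma - 1)/3$. The plan is to mimic the bucketing argument in \cref{claim:many-large-p-hat}: split the sum defining $\gamma$ into the contributions from $M$ and from its complement, use the pointwise bound $f(x) \leq 1/2^n$ on $M$ and the stronger bound $2^n f(x) \leq 1/2$ on $M^c$ to control $4^n f(x)^3$ in terms of $f(x)$, and then rearrange.

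The key observation is that for any $x$, $4^n f(x)^3 = (2^n f(x))^2 \cdot f(x)$, so a pointwise bound on $2^n f(x)$ immediately converts into a linear bound on $4^n f(x)^3$ in terms of $f(x)$. On $M$ we only have $2^n f(x) \leq 1$, giving $4^n f(x)^3 \leq f(x)$; on $M^c$ we have $2^n f(x) \leq 1/2$, giving $4^n f(x)^3 \leq f(x)/4$. Summing,
\[
\gamma \;\leq\; \sum_{x \in M} f(x) + \tfrac{1}{4} \sum_{x \notin M} f(x) \;=\; \tfrac{3}{4}\sum_{x \in M} f(x) + \tfrac{1}{4}\sum_{x \in \F_2^{2n}} f(x).
\]
Using the hypothesis $\sum_x f(x) \leq 1$ to bound the second term by $1/4$ and then solving for $\sum_{x \in M} f(x)$ yields exactly the claimed inequality.

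I do not anticipate any real obstacle here; the argument is a direct analogue of \cref{claim:many-large-p-hat}, where the threshold $\gamma/4$ is replaced by $1/2$ and the bucket $M^c$ contribution is bounded using $(2^n f(x))^2 \leq 1/4$ rather than $(2^n f(x))^2 \leq (\gamma/4)(2^n f(x))$. The only thing to double-check is whether the definition of $M$ should use strict or non-strict inequality; since the bound $4^n f(x)^3 \leq f(x)/4$ uses $2^n f(x) \leq 1/2$, the strict version $2^n f(x) > 1/2$ in the statement of $M$ is compatible (the boundary set has measure folded into $M^c$). Note also that the proof does not require any positivity of $\widehat{f}$, only the pointwise bound $f(x) \leq 1/2^n$ and $\sum_x f(x) \leq 1$, which both hold for $f = \hat{p}_\rho$ (after scaling by $2^n$, as in \cref{fact:p-hat-non-negative,fact:p-hat-sum-1}) and for $f = p_\rho$ when $\rho$ is a pure state.
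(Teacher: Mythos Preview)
Your proposal is correct and essentially identical to the paper's own proof: both split $\gamma = 4^n\sum_x f(x)^3$ over $M$ and $M^c$, bound $4^n f(x)^3 \le f(x)$ on $M$ and $4^n f(x)^3 \le f(x)/4$ on $M^c$, and then use $\sum_x f(x)\le 1$ to rearrange to $\sum_{x\in M} f(x) \ge (4\gamma-1)/3$. The only cosmetic difference is that the paper phrases the final step as a proof by contradiction, whereas you solve the inequality directly.
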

	\begin{proof}
		Let $\gamma \coloneqq 4^n \sum_{x \in \F_2^{2n}} f(x)^3$ for conciseness.
		We note that $x \not\in M$ implies that $4^n f(x)^2 \leq \frac{1}{4}$, while $x \in M$ implies $4^n f(x)^2 \leq 1$.
		We also observe that \[\sum_{x \not\in M} f(x) \leq 1 - \sum_{x \in M} f(x).\]
		If we assume, for the sake of contradiction, that $\sum_{x \in M} f(x) < \frac{4 \gamma - 1}{3}$.
		Then
		\[
		\gamma = 4^n \left(\sum_{x \in M} + \sum_{x \not \in M}\right) f(x)^3 \leq \sum_{x \in M} f(x) + \frac{1}{4} \sum_{x \not \in M} \leq \frac{3}{4}\sum_{x \in M} f(x) + \frac{1}{4} < \gamma,
		\]
		which is a clear contradiction.
	\end{proof}
	
	\begin{fact}[{\cite[Lemma 4.8]{chen2024stabilizerbootstrappingrecipeefficient}}]\label{fact:m-commute}
		Let $M = \{x \in \F_2^{2n} : 2^n p_\rho(x) > \frac{1}{2}\}$. Then for all $x, y \in M$, $[x, y] = 0$.
	\end{fact}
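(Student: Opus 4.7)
The plan is to argue by contradiction. Suppose toward a contradiction that there exist $x, y \in M$ with $[x, y] = 1$. The hypothesis $x, y \in M$ says $\tr(W_x \rho)^2 > 1/2$ and $\tr(W_y \rho)^2 > 1/2$, so $\tr(W_x \rho)^2 + \tr(W_y \rho)^2 > 1$. My target is therefore to derive from the anticommutation the matching upper bound $\tr(W_x \rho)^2 + \tr(W_y \rho)^2 \leq 1$, which contradicts this strict lower bound.

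The first step is to broaden the picture to the triple $(W_x, W_y, W_{x+y})$. Each of these operators is Hermitian and squares to the identity; one verifies this directly, noting that the $i^{a'\cdot b'}$ prefactor in the definition of $W_a$ is precisely what cancels the sign arising when one pushes the $X$'s through the $Z$'s within a tensor factor. By bilinearity of the symplectic product, $[x+y, x] = [y, x] = 1$ and $[x+y, y] = [x, y] = 1$, so \cref{fact:weyl-anticommute} implies that the three operators pairwise anticommute.

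I would then establish the following Bloch-ball-style inequality, which is the substantive core of the argument: for any three Hermitian operators $A, B, C$ on a finite-dimensional Hilbert space with $A^2 = B^2 = C^2 = I$ that pairwise anticommute, and any density matrix $\sigma$,
\[
\tr(A\sigma)^2 + \tr(B\sigma)^2 + \tr(C\sigma)^2 \leq 1.
\]
The quick proof is that for every unit vector $(\alpha, \beta, \gamma) \in \mathbb{R}^3$, the combination $D = \alpha A + \beta B + \gamma C$ satisfies $D^2 = I$, because the squared terms contribute $\alpha^2 + \beta^2 + \gamma^2 = 1$ and the cross terms cancel by pairwise anticommutation. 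Hence $D$ has spectrum in $\{\pm 1\}$ and $|\tr(D\sigma)| \leq 1$. Maximizing the linear form $(\alpha, \beta, \gamma) \mapsto \alpha \tr(A\sigma) + \beta \tr(B\sigma) + \gamma \tr(C\sigma)$ over the unit sphere yields the claimed sum-of-squares bound.

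Instantiating the inequality with $(A, B, C, \sigma) = (W_x, W_y, W_{x+y}, \rho)$ gives $\tr(W_x \rho)^2 + \tr(W_y \rho)^2 + \tr(W_{x+y} \rho)^2 \leq 1$, and in particular $\tr(W_x \rho)^2 + \tr(W_y \rho)^2 \leq 1$, contradicting the hypothesis that both $x$ and $y$ lie in $M$. The only real technicality in the whole argument is the direct check that $W_a^2 = I$ and $W_a^\dagger = W_a$ for every $a \in \F_2^{2n}$, which I expect to be the sole spot where a reader might want a line of computation — but this is a short verification rather than a genuine obstacle.
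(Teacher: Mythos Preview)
Your argument is correct. The paper itself does not prove this statement at all --- it is imported verbatim as a cited fact from \cite{chen2024stabilizerbootstrappingrecipeefficient} --- so there is no in-paper proof to compare against. One small remark: introducing the third operator $W_{x+y}$ is unnecessary, since the two-operator version of your Bloch-ball inequality (take $D = \alpha W_x + \beta W_y$ for $(\alpha,\beta)$ on the unit circle, use $D^2 = I$ from anticommutation, and maximize) already yields $\tr(W_x\rho)^2 + \tr(W_y\rho)^2 \leq 1$ directly; but this does not affect the validity of what you wrote.
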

	
	\begin{theorem}
		\[\text{If } 4^n \sum_{x \in \F_2^{2n}} p_\rho(x)^3 \geq \gamma, \text{ then } \calF(\rho) \geq \frac{4 \gamma - 1}{3}.\]
	\end{theorem}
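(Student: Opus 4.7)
The plan is to chain together the three tools that have already been assembled just before the statement: the mass concentration bound \cref{lem:mass-in-M-lower-bound}, the commutativity of the heavy set \cref{fact:m-commute}, and the fidelity lower bound for Lagrangian subspaces \cref{lem:fidelity-lower}. Instantiate \cref{lem:mass-in-M-lower-bound} with $f = p_\rho$, which is valid since $0 \leq p_\rho(x) \leq 1/2^n$ and $\sum_x p_\rho(x) = \tr(\rho^2) \leq 1$. This immediately gives
\[
\sum_{x \in M} p_\rho(x) \;\geq\; \frac{4\gamma - 1}{3},
\qquad M = \{x \in \F_2^{2n} : 2^n p_\rho(x) > \tfrac{1}{2}\}.
\]

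Next, I would invoke \cref{fact:m-commute} to conclude that every pair $x, y \in M$ satisfies $[x, y] = 0$. This means $\spn(M)$ is an isotropic subspace of $\F_2^{2n}$, so (by the standard symplectic Gram-Schmidt argument already used in \cref{thm:stabilizer-cover}) it can be extended to a Lagrangian subspace $L \supseteq \spn(M) \supseteq M$. Because $p_\rho$ is pointwise non-negative, enlarging the summation domain only helps:
\[
\sum_{x \in L} p_\rho(x) \;\geq\; \sum_{x \in M} p_\rho(x) \;\geq\; \frac{4\gamma - 1}{3}.
\]

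Finally, apply \cref{lem:fidelity-lower} to the Lagrangian $L$: there is a stabilizer state $\ket{\phi}$ with $\weyl(\ket{\phi}) = L$ for which $\braket{\phi | \rho | \phi} \geq \sum_{x \in L} p_\rho(x)$. Combining these inequalities and taking the maximum over stabilizer states yields $\calF(\rho) \geq (4\gamma - 1)/3$, as desired.

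I do not expect a genuine obstacle here, since every ingredient is already in place; the only subtle point is the non-negativity of $p_\rho$, which is needed both to keep the left-hand side of \cref{lem:mass-in-M-lower-bound} meaningful and to justify extending the sum from $M$ to a Lagrangian $L$. (For mixed $\rho$ the quantity $p_\rho$ is no longer a probability distribution, but non-negativity holds trivially since $p_\rho(x) = \tfrac{1}{2^n}\tr^2(W_x \rho)$, a squared real number.) The argument is essentially the mixed-state lift of the pure-state proof in \cite{grewal2023improved}, with \cref{lem:fidelity-lower} standing in for the identity $\calF(\ket{\psi}) = \sum_{x \in L} p_\Psi(x)$ and \cref{fact:m-commute} providing the isotropy of $M$.
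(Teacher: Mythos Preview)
Your proposal is correct and follows essentially the same argument as the paper: apply \cref{lem:mass-in-M-lower-bound} with $f=p_\rho$, use \cref{fact:m-commute} (plus bilinearity) to see that $M$ spans an isotropic subspace, extend to a Lagrangian, and finish with \cref{lem:fidelity-lower} and the non-negativity of $p_\rho$. Your write-up is in fact slightly more explicit than the paper's in checking the hypotheses on $p_\rho$ and in justifying the extension step.
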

	\begin{proof}
		Using \cref{lem:mass-in-M-lower-bound}, \[\sum_{x \in M} p_\rho(x) \geq \frac{4 \gamma - 1}{3}\]
		where $M \subset \F_2^{2n}$ generates an isotropic subspace by \cref{fact:m-commute} and the bilinearity of the symplectic product.
		Extending $M$ to a Lagrangian subspace and using \cref{lem:fidelity-lower} completes the proof, noting that $p_\rho(x) \geq 0$.
	\end{proof}

\end{document}